\renewcommand\bf\bfseries
\addspace\printfield{pages}\addspace
\newcommand{\leqnomode}{\tagsleft@true\let\veqno\@@leqno}
\newcommand{\reqnomode}{\tagsleft@false\let\veqno\@@eqno}
\numberwithin{equation}{section}
\newcommand\myshade{85}
\colorlet{mylinkcolor}{violet}
\colorlet{mycitecolor}{YellowOrange}
\colorlet{myurlcolor}{Aquamarine}
\definecolor{ct_black}{HTML}{000000}
\definecolor{ct_orange}{HTML}{ED872D}
\definecolor{ct_purple}{HTML}{7A68A6}
\definecolor{ct_blue}{HTML}{348ABD}
\definecolor{ct_turquoise}{HTML}{188487}
\definecolor{ct_red}{HTML}{E32636}
\definecolor{ct_pink}{HTML}{CF4457}
\definecolor{ct_green}{HTML}{467821}
\definecolor{ct2_green}{HTML}{9FF781}
\definecolor{ct2_green_dark}{HTML}{088A08}
\theoremstyle{plain}
\newtheorem{thm}{\protect\theoremname}[section]
\theoremstyle{plain}
\newtheorem{lem}[thm]{\protect\lemmaname}
\theoremstyle{plain}
\newtheorem{cor}[thm]{\protect\corollaryname}
\theoremstyle{plain}
\newtheorem{prop}[thm]{\protect\propositionname}
\theoremstyle{plain}
\newtheorem{assumption}[thm]{\protect\assumptionname}
\theoremstyle{remark}
\newtheorem{rem}[thm]{\protect\remarkname}
\theoremstyle{definition}
\newtheorem{defn}[thm]{\protect\definitionname}
\theoremstyle{plain}
\providecommand{\assumptionname}{Assumption}
\providecommand{\conventionname}{Convention}
\providecommand{\claimname}{Claim}
\providecommand{\corollaryname}{Corollary}
\providecommand{\definitionname}{Definition}
\providecommand{\lemmaname}{Lemma}
\providecommand{\propositionname}{Proposition}
\providecommand{\remarkname}{Remark}
\providecommand{\theoremname}{Theorem}
\providecommand{\examplename}{Example}
\crefname{section}{Section}{Sections}
\crefname{appendix}{Appendix}{Appendices}
\crefname{figure}{Figure}{Figures}
\crefname{assumption}{Assumption}{Assumptions}
\crefname{thm}{Theorem}{Theorems}
\crefname{lem}{Lemma}{Lemmas}
\crefname{rem}{Remark}{Remarks}
\crefname{table}{Table}{Tables}
\newtheorem*{lem*}{\protect\lemmaname}
\newcommand{\ee}{\operatorname{e}}
\newcommand{\ii}{\operatorname{i}}
\newcommand{\RR}{\mathbb{R}}
\newcommand{\CC}{\mathbb{C}}
\newcommand{\GG}{\mathbb{G}}
\newcommand{\calV}{\mathcal{V}}
\newcommand{\calU}{\mathcal{U}}
\newcommand{\ti}[1]{\widetilde{#1}}
\newcommand\norm[1]{\left\lVert#1\right\rVert}
\newcommand\abs[1]{\left|#1\right|}
\newcommand{\ip}[2]{\langle #1, #2 \rangle}
\newcommand{\dif}{\operatorname{d}}
\newcommand{\szpan}{\operatorname{span}}
\renewcommand{\Re}[1]{\operatorname{\mathbb{R}\mathbbm{e}}\left\{#1\right\}}
\newcommand{\ve}{\varepsilon}
\newcommand{\vf}{\varphi}
\newcommand{\Id}{\mathds{1}}
\newcommand{\dist}{\mathrm{dist}}
\newcommand{\supp}{\operatorname{supp}}
\newcommand{\im}{\operatorname{im}}
\newcommand{\eq}[1]{\begin{align*}#1\end{align*}}
\newcommand{\eql}[1]{\begin{align}#1\end{align}}
\newcommand{\br}[1]{\left(#1\right)}
\title{Lower Bounds on Quantum Tunneling\\ for Excited States}
\author{\href{cf@math.princeton.edu}{Charles L. Fefferman}, \href{mailto:jacobshapiro@princeton.edu}{Jacob Shapiro}\\
	{\footnotesize Department of Mathematics, Princeton University}\\
	 \href{miw2103@columbia.edu}{Michael I. Weinstein}\\
		\footnotesize{Department of Applied Physics and Applied Mathematics,}\\\footnotesize{and Department of Mathematics, Columbia University}
}
\begin{document}
	\reqnomode
	
	\maketitle
	\begin{abstract}
		We revisit the problem of quantum tunneling for a particle moving in the continuum, and in the absence of a magnetic field. In all spatial dimensions, we extend previous results to the case where the single-well potential satisfies reflection-symmetry.
  
	\end{abstract}

    \section{Introduction}\label{sec:intro}

    We study double-well Hamiltonians on $L^2(\RR^\nu)$ of the form \eql{ H_{\lambda,d} = P^2+\lambda^2 V_d(X) } where $V_d:\RR^\nu\to\RR$ is a double-well potential (with separation $d$ between the wells) and $\lambda$ is the well-depth; $P\equiv-\ii\nabla$ is the momentum operator and $X$ is the position operator.

We are interested in double-well potentials of the following form: let $a>0$, $d\in\RR^\nu\setminus\Set{0}$. For convenience and without loss we choose $d$ along the positive $1$-axis: $\norm{d}=\ip{d}{e_1}$. We loosely refer  to both the vector $d e_1$ and its magnitude by $d$. Then we take \eql{V_d(x) := v(x) + v(x-d)\qquad (x\in\RR^\nu)}
where $v:\RR^\nu\to\RR$ is a single-atom potential with $\supp(v)\subseteq B_a(0)$ and $d>2a$ so that  \eql{\supp(v)\cap\supp(v(\cdot-d))=\varnothing\,.} Generally speaking we need at least $v\in C^2$ for most of our results.

The single well Hamiltonian (independent of $d$) is given by
\eql{ \label{eq:single well Hamiltonian}h_\lambda := P^2+\lambda^2 v(X)}
and we assume $v$ is a trapping potential so that it has discrete spectrum of bound states below its essential spectrum comprised of the Laplacian's scattering states. We denote the sequence of single well bound states and energies by $\Set{\vf_j}_{j\geq1}\subseteq L^2(\RR^\nu)$ and $\Set{e_j}_{j\geq1}\subseteq(-\infty,0)$,  respectively. Both depend on $\lambda$, but we typically suppress this dependence if there's no ambiguity. For the most part we \emph{do not} require $v$ have a unique minimum (because we don't make use of the harmonic oscillator approximation). For the remainder of this introduction we assume that the sequence of eigenvalues $\Set{e_j}_{j\geq1}$ is simple.  

For fixed $d$ and $\lambda$ sufficiently large, it was shown in \cite{Simon_1983_AIHPA_1983__38_3_295_0},  in the above setting, that the double-well Hamiltonian $H_{\lambda,d}$  also has a sequence of discrete eigenvalues below its essential spectrum, which we denote by $\Set{E_j^\pm}_{j\geq1}$. For fixed $j\geq1$, each pair of levels $E_j^\pm$ should be understood as the splitting of the single-well level $e_j$ due to quantum mechanical tunneling between the two wells. Our main object of interest is thus the $j$th splitting \eql{ \Delta_j(\lambda,d) := E_j^+-E_j^-\qquad(j\geq 1)\,.} 

The quantity $\Delta_1(\lambda,d)$ is very well-studied.  Under the assumption that $v$ is smooth and has a unique non-degenerate minimum, one has the asymptotic formula \cite{Simon_1984_10.2307/2007072,Helffer_Sjostrand_1984} \eq{
-\frac{1}{\lambda}\log\br{\Delta_1(\lambda,d) } \to S_E(0,d),\qquad {\rm as}\quad \lambda\to\infty,
} where $S_E$ is the classical Euclidean action from the point $0$ to the point $d$:  \eq{
S_E(x,y) := \inf_{\substack{
    T > 0; \\
    \gamma:[0,T] \to \mathbb{R}^\nu; \\
    \gamma(0) = x,\ \  \gamma(T) = y
}} \int_{0}^T \left[\norm{\dot{\gamma}(t)}^2+V(\gamma(t))\right]\dif{t}\,,
}  i.e., the value of the classical action, with the sign of the potential flipped (hence the name Euclidean), on the extremizing solution of Newton's equations \eq{
\ddot{\gamma } = \nabla V \circ \gamma\,.
} with appropriate boundary conditions; it is also equal to the Agmon distance \cite{Simon_1984_10.2307/2007072}. With no assumption on a unique non-degenerate minimum, it was shown in \cite{FLW17_doi:10.1002/cpa.21735} that for fixed $d$ and all $\lambda$ sufficiently large, \eq{
\Delta_1(\lambda,d) \gtrsim \exp\br{-\lambda C_{d,v}}
} for some constant $C_{d,v}>0$, which is independent of $\lambda$.

\begin{tcolorbox}The basic question we wish to address in this note, with partial answer in the case of reflection symmetric $v$, is:\\
Are there lower bounds on  $\Delta_j(\lambda,d)$ for excited states, $j\ge2$?
\end{tcolorbox}

An emergent quantity, instrumental in the study of $\Delta_j(\lambda,d)$, is the so-called \emph{hopping coefficient}; see, for example, \cite{Dimassi2010-hq,FSW_22_doi:10.1137/21M1429412,ShapWein22,SlaterKoster1954,Ashcroft_Mermin_1976}\footnote{In the semi-classical asymptotic analysis literature, it is sometimes referred to as the interaction matrix element; in chemistry it is the overlap integral.}. 
 It is defined as \eql{ \label{eq:hopping coefficient def}\rho_j \equiv \rho_j(\lambda,d):= \ip{\vf_j}{\left(H_\lambda-e_j\Id\right) R^d \vf_j } = \lambda^2\ip{\vf_j}{v(X)R^d\vf_j }\qquad(j\geq1)} where for any $u\in \RR^\nu$, $R^u$ is the shift operator, given by \eql{(R^u f)(x) \equiv f(x-u)\qquad(x\in\RR^\nu)\,.}

Actually the hopping coefficient $\rho$ is of independent interest even before studying $\Delta$; see, for example \cite{ShapWein22}, where it is central to establishing the tight-binding reduction for magnetic quantum Hall systems, and  \cite{FLW17_doi:10.1002/cpa.21735}, where the validity of Wallace's tight binding model of graphene was established in the strong binding regime.

The connection between$\Delta_j(\lambda,d)$ and $\rho_j(\lambda,d)$ is well-known; see e.g. \cite{FSW_22_doi:10.1137/21M1429412}. There it was established that \eql{\lim_{\lambda\to\infty}\frac{\Delta_1(\lambda,d)}{2|\rho_1(\lambda,d)|}=1\,.} Here we rather show that for $\lambda$ sufficiently large and fixed, \eql{\label{eq:connection between rho and Delta}\lim_{d\to\infty}\frac{\Delta_j(\lambda,d)}{2|\rho_j(\lambda,d)|}=1\qquad(j\geq1)\,.}

\subsection*{Outline} 
This paper is organized as follows. \cref{sec:1dwarmup} is a warmup in the spatial dimension $\nu=1$ case. We derive an exact formula for $\rho_j(d)$, for any $j\ge1$. This result makes use of an identity for $\rho$, which only involves information about $\vf_j$ at the bisector $x=d/2$, which is  outside the support of $v$. In \cref{sec:general-formula} we  derive a generalization, in which $\rho$ is expressed as an integral of $\vf_j$ and its derivative over the hyperplane $x_1=d/2$, which is outside the support of $v$. Although, for $j\ge2$, $\vf_j$ changes sign, this integral representation of $\rho_j$, together with the additional assumption that $v$ is reflection symmetric,  yields 
 an expression for $\rho_j$ having an definite sign. A classical result on the non-vanishing of solutions to elliptic PDE on open sets, implies a lower bound on $\rho$ as a function of $d$. In \cref{app:excited} we derive, using energy estimates proved in \cref{sec:energy-estimates}, under the assumption that $e_j$ is a simple eigenvalue, that \cref{eq:connection between rho and Delta} holds.

  \subsection*
{Acknowledgements}
MIW was supported in part by NSF grant DMS-1908657, DMS-1937254 and Simons Foundation Math + X Investigator Award \# 376319 (MIW). Part of this research was carried out during the 2023-24 academic year, when MIW was a Visiting Member in the School of Mathematics - Institute of Advanced Study, Princeton, supported by the Charles Simonyi Endowment, and a Visiting Fellow in the Department of Mathematics at Princeton University.

 \section{Warm up--the one-dimensional case}\label{sec:1dwarmup}
    It is instructive to obtain a precise expression for $\rho$ in one space dimension. Remarkably, in this one-dimensional setting, there is an explicit expression for it.
    \begin{thm}
        Let $d> 2a$. For any $j\geq1$ and $\lambda>0$,  $\rho_j$ as defined in \cref{eq:hopping coefficient def} is given by \eq{\rho_j(d) = C_j(\lambda,v)  \sqrt{-e_j} \ee^{-\sqrt{-e_j}d}} where $C_j(\lambda,v)\in\CC\setminus\Set{0}$: this constant depends on $\lambda$ and $v$, but  not on $d$.  
    \end{thm}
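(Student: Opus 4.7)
The plan is to combine the explicit one-dimensional exponential tails of $\vf_j$ outside $\supp v$ with a Wronskian identity that collapses the integral defining $\rho_j$ onto the bisector $x = d/2$. Throughout I will write $\kappa := \sqrt{-e_j}>0$, $f(x) := \vf_j(x)$, and $g(x):=(R^d \vf_j)(x) = \vf_j(x-d)$.

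First, I would record the tails of $\vf_j$. Since $\supp v \subseteq [-a,a]$, $\vf_j$ solves $-\vf_j''=e_j \vf_j$ on $\RR\setminus[-a,a]$, and square-integrability forces
\eq{
\vf_j(y) = A_j\,\ee^{-\kappa y}\ \ (y \geq a), \qquad \vf_j(y) = B_j\,\ee^{\kappa y}\ \ (y \leq -a),
}
with $A_j, B_j \in \CC\setminus\Set{0}$: non-vanishing follows from backward uniqueness for the second-order linear ODE, since $A_j=0$ (resp.\ $B_j=0$) would force zero Cauchy data for $\vf_j$ at $a$ (resp.\ $-a$) and hence $\vf_j\equiv 0$.

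Second, since $d > 2a$ the supports of $v(\cdot)$ and $v(\cdot-d)$ are disjoint, so on $[-a,a]$ the translate $g$ obeys the free equation $g''+e_j g=0$, while $f$ obeys $f''+e_j f = \lambda^2 v f$. Subtracting yields the pointwise Wronskian identity $\lambda^2 v f g = (f'g - f g')'$ on $[-a,a]$, hence
\eq{
\rho_j(d) = \lambda^2\int_{-a}^{a} v\,f\,g\,dx = \big[f'g - fg'\big]_{x=-a}^{x=a}.
}
On $(-\infty,-a)$ both $f$ and $g$ are scalar multiples of $\ee^{\kappa x}$, so $f'g-fg'\equiv 0$ there and the boundary term at $-a$ drops out. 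On $(a, d-a)$, which contains the bisector $d/2$, $f$ and $g$ once more satisfy the same free equation with spectral parameter $e_j$, so $f'g - fg'$ is constant there; in particular its value at $a$ equals its value at $d/2$.

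Third, I would evaluate the surviving Wronskian at the bisector, using $f(x) = A_j \ee^{-\kappa x}$ and $g(x) = B_j \ee^{\kappa(x-d)}$ on $(a, d-a)$ (the latter because $x - d < -a$ on that interval). A short computation gives
\eq{
(f'g - fg')(d/2) = -2\kappa\, A_j B_j\, \ee^{-\kappa d} = -2\, A_j B_j\, \sqrt{-e_j}\, \ee^{-\sqrt{-e_j}\,d},
}
so that $C_j(\lambda,v) := -2 A_j B_j \in \CC\setminus\Set{0}$, independent of $d$, gives the claimed formula. No serious obstacle arises — the conceptual content is simply that the disjoint-support hypothesis, via the Wronskian, moves the entire $d$-dependence of $\rho_j$ into the exponential decay of $g$ at the bisector, while the derivative in the Wronskian manufactures the prefactor $\sqrt{-e_j}$. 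This is precisely the mechanism generalized in \cref{sec:general-formula}, where the point evaluation becomes an integral over the hyperplane $\Set{x_1 = d/2}$.
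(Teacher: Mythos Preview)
Your argument is correct and essentially the same as the paper's: both reduce $\rho_j$ to the Wronskian of $\vf_j$ and its translate at the bisector $x=d/2$ and then evaluate using the explicit exponential tails (the paper integrates by parts over $(-\infty,d/2]$, while you integrate over $[-a,a]$ and transport the Wronskian along the free region --- a cosmetic difference). One minor slip: the definition \cref{eq:hopping coefficient def} has $\overline{\vf_j}$ in the first slot, so your $f$ should carry a conjugate throughout (or you should note that $\vf_j$ may be taken real), yielding $C_j=-2\overline{A_j}B_j$ rather than $-2A_jB_j$; this does not affect the method or the conclusion.
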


    \begin{proof}

    For convenience, since $j$ is fixed in this discussion, we drop it from the expressions during the proof. Hence, $\vf\equiv\vf_j$ etc.
    
    We define $\psi:=\vf(\cdot-d)$ and $c:=\frac{d}{2}$ for convenience, and make the following chain of observations:
    \eq{ \rho(d) &:= \ip{\vf}{\lambda^2 v \psi}_{L^2(\RR)}\\ 
    &=\int_{\RR} \overline{\vf} \lambda^2 v \psi \\
    &=\int_{(-\infty,c]} \overline{\vf} \lambda^2 v \psi\tag{$\supp(v)\subseteq[-a,a]$ \& $a<c<d-a$} \\
    &= \int_{(-\infty,c]} \left(\overline{e \vf + \vf''}\right)  \psi\qquad\tag{$\lambda^2v \vf = e\vf+\vf''$}\\
    &\stackrel{\text{IBP}}{=} e\ip{\vf}{\psi}_{L^2((-\infty,c])}+\overline{\vf'(c)}\psi(c)-\overline{\vf'(-\infty)}\psi(-\infty)-\int_{(-\infty,c]}\vf'\psi'\\
    &\stackrel{\text{IBP}}{=} e\ip{\vf}{\psi}_{L^2((-\infty,c])}+\overline{\vf'(c)}\psi(c)-\\
    &\qquad -\overline{\vf(c)}\psi'(c)-\overline{\vf(-\infty)}\psi'(-\infty)+\int_{(-\infty,c]}\overline{\vf}\psi''\\
    &= e\ip{\vf}{\psi}_{L^2((-\infty,c])}+\overline{\vf'(c)}\psi(c) -\overline{\vf(c)}\psi'(c)+\ip{\vf}{\psi''}_{L^2((-\infty,c])}
    } Here we used the fact that thanks to $\vf\in L^2$, the terms at $-\infty$ vanish.
    
    Now we remark that the function $\psi(x)=\varphi(x-d)$ also obeys the Schr\"odinger equation: 
    \eq{-\psi''(x)+\lambda^2 v(x-d)\psi(x)=e\psi(x).} 
    Since $d>2a$, $v(x-d)$ which is supported in the interval $[d-a,d+a]$, vanishes for $x\le c=d/2$. Hence,  \eq{-\psi''(x)=e\psi(x)\qquad {\rm for}\quad x\leq c\,. } 
    It follows that $\ip{\vf}{\psi''}_{L^2((-\infty,c])} = -e \ip{\vf}{\psi}_{L^2((-\infty,c])}$ and we find 
    \begin{equation} \rho(d) = \left(\overline{\vf'}\psi-\overline{\vf}\psi'\right)(c)\ =\ \overline{\vf'(c)}\vf(-c)-\overline{\vf(c)}\vf'(-c).
    \label{eq:rho-d1}\end{equation}
  Since $a<c=d/2<d-a$, the expression for the hopping coefficient $\rho(d)$, displayed in \cref{eq:rho-d1}, depends 
    only on the values of $\varphi$ and $\varphi'$
     at a point outside the support of $v$, where there is an explicit exponential form for the wave-function. Hence, it is simple to evaluate the expression for $\rho(d)$ in \cref{eq:rho-d1}, up to a non-vanishing $d-$ independent factor. We proceed with this evaluation.
     
  Setting  $\kappa:=\sqrt{-e}>0$, we have
  \eq{
  \vf(x)=\begin{cases}
      A_- e^{\kappa x}, & x\le-a\\
      A_+ e^{-\kappa x}, & x\ge a.
  \end{cases}
  }
    where the  constants $A_\pm$ are both non-vanishing, and  depend on $v$ and $\lambda$, but not on $d$.
   That neither $A_\pm$ do not vanish follows from global existence and uniqueness  on $\RR$ for the ODE  $h_\lambda\vf=e\vf$.

   Since $c=d/2>a$, we have \eq{ \vf(c) &= A_+\ee^{-\kappa c},\quad 
    \vf'(c) = -A_+\kappa \ee^{-\kappa c}\\
     \vf(-c) &= A_- \ee^{-\kappa c},\quad 
       \vf'(-c) = A_- \kappa\ee^{-\kappa c} 
    }

 Substitution into \cref{eq:rho-d1}, we obtain 
 \begin{equation} \rho(d)  = -\overline{A_+}\kappa e^{-\kappa c}A_-e^{-\kappa c} -\overline{A_+}e^{-\kappa c}A_-\kappa e^{-\kappa c} = -2\overline{A_+}A_-\kappa e^{-2\kappa c} = 
 -2\overline{A_+}A_-\kappa e^{-\kappa d}.
    \label{eq:rho-d2}\end{equation}
Setting $C=-2\overline{A_+}A_-\kappa$, which depends on $\lambda$, $v$ but not on $d$
completes the proof.
    \end{proof}

Note that if $v(x)=v(-x)$ then $A_-=\pm A_+$ (eigenstates are either even or odd) and we have
\eq{ \rho(d) = \mp 2|A_+|^2\kappa e^{-\kappa d}.}

    \section{A general formula for the hopping in two and higher dimensions}\label{sec:general-formula}
    \color{black}
	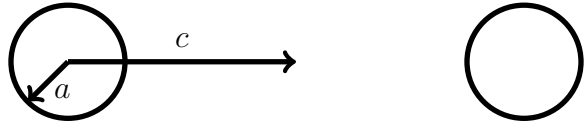
\begin{figure}
	    \centering
	    \begin{tikzpicture}[x=1cm,y=1cm,scale=0.75]
\draw [line width=2pt] (0,0) circle (1cm);
\draw [line width=2pt] (8,0) circle (1cm);
\draw [->,line width=2pt] (0,0) -- (4,0);
\draw [->,line width=2pt] (0,0) -- (-0.7,-0.7);
\node[below] at (-0.1,-0.2) {$a$};
\node[below] at (2,0.7) {$c$};

\end{tikzpicture}
	    \caption{Setup for double-well. The support radius of each well is $a$ and the half-distance between the centers of the wells is $c:=d/2$. We assume $\norm{c}>a$ so that the wells do not overlap.}
	    \label{fig:setup for double well}
	\end{figure}

We would like to generalize the argument of the previous section to spatial dimensions $\nu\ge2$. Recall $d \equiv \norm{d} \equiv \norm{d} e_1$.

    We shall show that 
    \begin{prop}
         Let $j\geq1$, assume $d > 2a$ and let $D\subseteq\RR^\nu$ be a subset with smooth boundary such that \eq{\supp(v)\subseteq\operatorname{interior}(D)\ \ {\rm and}\ \  \supp(v(\cdot-d))\subseteq\operatorname{interior}(D^c)\,.} Then, if $n:\partial D\to\RR^\nu$ is the normal unit vector pointing out of $D$, we have \eql{
        \rho_j(\lambda,d)=-\int_{\partial D} \left((\nabla \overline{\vf_j})R^d \vf_j-\overline{\vf_j}\nabla R^d \vf_j\right) \cdot n \,.\label{eq:rhod} 
         }

         In particular, with the choice \eq{D := \Set{x_1 \leq c}\,,\qquad \partial D = \Set{x_1 = c}\,,\qquad n=e_1} which satisfies the assumptions above, we get
        \eql{\label{eq:master formula for rho}\rho_j(\lambda,d)=\int_{\Set{x\in\RR^\nu|x_1=d/2}} \left[(\partial_1 \overline{\vf})R^d\vf_j-\overline{\vf}\partial_1 R^d\vf_j\right]\,.}
    \end{prop}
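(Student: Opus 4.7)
The plan is to mirror the one-dimensional argument from \cref{sec:1dwarmup}, upgrading the two successive integrations by parts to a single application of Green's second identity in $\nu$ dimensions. Starting from
\[
\rho_j(\lambda,d) = \int_{\RR^\nu}\overline{\vf_j}\,\lambda^2 v\,R^d\vf_j,
\]
the hypothesis $\supp(v)\subseteq\operatorname{interior}(D)$ allows me to restrict the integrand to $D$. The single-well eigenvalue equation $-\Delta\vf_j + \lambda^2 v\,\vf_j = e_j\vf_j$, read in the form $\lambda^2 v\,\overline{\vf_j} = e_j\overline{\vf_j} + \Delta\overline{\vf_j}$ (both $v$ and $e_j$ are real), then yields
\[
\rho_j = e_j\int_D\overline{\vf_j}\,R^d\vf_j + \int_D(\Delta\overline{\vf_j})\,R^d\vf_j.
\]

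Next I would apply Green's second identity to the second volume integral, trading the Laplacian on $\overline{\vf_j}$ for one on $R^d\vf_j$ and picking up a boundary integral over $\partial D$ with outward normal $n$. The second support hypothesis is what makes the rest collapse: since $\supp(v(\cdot-d))\subseteq\operatorname{interior}(D^c)$, the translated potential vanishes identically on $D$, and hence the Schr\"odinger equation for $R^d\vf_j$ reduces on $D$ to $-\Delta R^d\vf_j = e_j R^d\vf_j$. Substituting this into the first term produced by Green's identity yields $-e_j\int_D\overline{\vf_j}\,R^d\vf_j$, which cancels the initial $e_j$-term exactly. What remains is precisely the boundary integral asserted in \cref{eq:rhod} (I expect to have to reconcile the overall sign against the $\nu=1$ computation, which gives the form of \cref{eq:master formula for rho}). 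The specialization to $D=\Set{x_1\le d/2}$, $n=e_1$ is then immediate.

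The main technical nuisance is that the natural choice of $D$ is unbounded: both the half-space and its bounding hyperplane have infinite measure, so Green's identity is not directly applicable and the resulting surface integral needs absolute convergence. I would handle this by carrying out the entire manipulation on the truncated, bounded set $D\cap B_R(0)$ and then passing to the limit $R\to\infty$. The cap contribution $\int_{\partial B_R\cap D}$ vanishes in the limit thanks to the exponential (Agmon) decay of $\vf_j$ and $\nabla \vf_j$ outside $\supp(v)$ — a standard consequence of elliptic regularity plus the eigenvalue equation, since outside $\supp(v)$ the function $\vf_j$ is a decaying solution of $(-\Delta - e_j)u=0$ with $e_j<0$. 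The same pointwise decay ensures the limiting hyperplane integral in \cref{eq:master formula for rho} is absolutely convergent, so no further work is needed once the bounded-domain identity is in hand.
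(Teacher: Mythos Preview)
Your proposal is correct and follows essentially the same route as the paper: restrict the integral to $D$ using $\supp(v)\subseteq D$, use the eigenvalue equation to replace $\lambda^2 v\,\overline{\vf_j}$, apply Green's second identity, and cancel the volume terms using $-\Delta R^d\vf_j = e_j R^d\vf_j$ on $D$. Your truncation argument for unbounded $D$ via Agmon decay is in fact more careful than the paper, which applies the divergence theorem on the unbounded region without comment.
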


        \begin{rem}The formula continues to hold in the magnetic case, if one assumes that $A:\RR^\nu\to\RR^\nu$, the magnetic vector potential, is such that   \eql{\label{eq:assumption on A}[P_i,A_i(X)]=0\qquad(i=1,\dots,\nu)\,.} Moreover, $R^d\vf_j$ should now be defined using \emph{magnetic} translations.
       \end{rem}
        \begin{rem}
            The formula \cref{eq:master formula for rho} is not new. We traced its earliest appearance, or at least in similar guise, to \cite[Problem 50.3]{landau1977quantum}; see the expression for their $E_1-E_0$. See also \cite{Helffer_Sjostrand_1984} and \cite[Eq-n (4.1)]{fournais2025purely}.
        \end{rem}
        
    \begin{proof}

        We first note two identities that follow from Leibniz rule: \eql{\label{eq:IBP1}f\operatorname{div} V = \operatorname{div} (fV)-V\cdot\operatorname{grad}f} and \eql{\label{eq:IBP2}V\cdot\operatorname{grad}f = \operatorname{div} (fV) - f\operatorname{div} V}
        Integrating over $D\subseteq\RR^\nu$, and  integration by parts gives:
        \eq{ \ip{\Delta f}{g}_{L^2(D)} &\equiv \int_{D} \overline{\Delta f}g \\
        &= \int_{D} \overline{(\operatorname{div}\circ\operatorname{grad} f)}g \\
        &= \int_{D} \left[\operatorname{div}(g\operatorname{grad}\overline{f})-(\operatorname{grad}\overline{f})\cdot (\operatorname{grad}g)\right]\qquad\qquad \qquad\qquad\qquad\qquad \qquad\qquad\tag{Using \cref{eq:IBP1} with $V=\operatorname{grad}\overline{f}$} \\
        &= \int_{D} \left[\operatorname{div}(g\operatorname{grad}\overline{f})-\operatorname{div}(\overline{f}\operatorname{grad}g)+\overline{f}\operatorname{div}\circ\operatorname{grad}g\right] \tag{Using \cref{eq:IBP2} with $V=\operatorname{grad}g$}\\
        &= \ip{f}{\Delta g}_{L^2(D)} + \int_{D}\operatorname{div}\left((\nabla \overline{f})g-\overline{f}\nabla g\right) \\
        &= \ip{f}{\Delta g}_{L^2(D)} + \int_{\partial D} \left((\nabla \overline{f})g-\overline{f}\nabla g\right) \cdot n\ .
        }
        Here,  $n$ denotes the unit normal on $\partial D$, which points exterior to $ D$.

        Again as for the 1D case, for convenience we denote $c:=d/2$ and will eventually integrate over the hyperplane $x_1=c$. For convenience we define $\psi:=R^d\vf_j$ and also drop the $j$ subscripts since $j$ is fixed.

        Let now $D\subseteq \RR^\nu$ be any reasonable subset such that $\supp(v)\subseteq \operatorname{interior}(D)$ and $\supp v(\cdot-d)\subseteq \operatorname{interior}(\RR^\nu\setminus D)$. Then for $\rho$ we find 
        \eq{ \rho(d) &\equiv \ip{\vf}{\lambda^2 v \psi}_{L^2(\RR^\nu)}\\
        &= \int_{D} \overline{\vf}\lambda^2 v \psi = \int_{D} \overline{\lambda^2 v\vf} \psi \\
        &= \int_{D} \overline{\left(e-P^2\right)\vf} \psi \\
        &= e\ip{\vf}{\psi}_{L^2(D)} - \ip{P^2 \vf}{\psi}_{L^2(D)}\,.
        } 
        Now using the various integration by parts formulae, we get 
        \eq{\rho(d) &= e\ip{\vf}{\psi}_{L^2(D)} - \ip{\vf}{P^2 \psi}_{L^2(D)} -\int_{\partial D} \left((\nabla \overline{\vf})\psi-\overline{\vf}\nabla \psi\right) \cdot n \,.
        }
        A few observations are in order. First, since $\psi\equiv R^d \vf$, since we are integrating within $D$ which does not intersect $\supp(v(\cdot-d))$, on $D$, \eq{P^2\psi &\stackrel{\mathrm{on}\,D}{=}\left[P^2+\lambda^2 v(X-d)\right]\psi\\
        &=R^d \left[P^2+\lambda^2 v(X)\right]\vf\\
        &=R^d e \vf = e \psi } so that \cref{eq:rhod} follows.
    \end{proof}

    \section{Reflection symmetric potentials }

    Suppose  $v$ is reflection symmetric about the bisecting plane, $x_1=c = d/2$, between the two wells.  
     That is, $Uv=v$, where 
     $U: L^2(\RR^\nu)\to L^2(\RR^\nu)$ is defined by:
    \eq{(U f)(x_1,x_\perp) := f(2c-x_1,x_\perp), \qquad f\in L^2(\RR^\nu)\ .}

    For $j=1$, the ground state $\vf_1$ enjoys the same symmetry. Indeed, we know that $\vf_1$ is both non-degenerate and positive, and $U \vf_1$ is also an eigenstate of the Hamiltonian $h$, so we must have $U \vf_1 = \vf_1$ which is the desired symmetry. However, for higher $j$, $\vf_j$ may fail to obey the same symmetry: $\vf_j$ is not necessarily positive nor non-degenerate. However, thanks to $[h,U]=0$, the eigenspace of $h$ breaks into a direct sum of even and odd functions ($\ker\br{U\pm\Id}$) and so one may \emph{choose} all eigenfunctions of $h$ to have definite parity. We assume this is indeed the case and hence our result applies to all eigenfunctions of $h$. 

    We then see that if $U\vf_j = \pm \vf_j$, \cref{eq:master formula for rho} reduces to \eql{\label{eq:rho with reflection symmetry}\rho(d) = \pm \left.\partial_{x_1}\right|_{x_1=\frac{d}{2}}\int_{x^\perp\in\RR^{\nu-1}}\left|\vf_j(x_1,x^\perp)\right|^2\dif{x^\perp}\,.}

    The following theorem estimates the dependence of $\rho_j$ on the distance $d$, and thus may be useful in the fixed $\lambda$, large $d$ regime:
    \begin{thm}\label{thm:lb_rho_j}
        Let $\lambda>0$ and $d\in\RR^\nu$ be given. If $Uv=v$ (so $U\vf_j=\pm\vf_j$ can be arranged for all $j\geq1$), then for any $j\geq1$, $\ve>0$ there exists some $C_\ve(\lambda)>0$ such that \eql{\label{eq:lower bound on rho}
        \abs{\rho(d)} \geq C_\ve(\lambda)\sqrt{-e_j}\exp\br{-d \sqrt{-e_j+\ve}}
        }
    \end{thm}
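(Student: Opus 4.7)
The plan is to reduce the lower bound on $|\rho_j(d)|$ to a quantitative non-vanishing statement for $\vf_j$ in transverse Fourier variables, exploiting that the bisector $\{x_1=c\}$ with $c=d/2>a$ lies strictly outside $\supp v$, so that $\vf_j$ satisfies the free equation $(-\Delta+\kappa^2)\vf_j=0$ on the half-space $\{x_1>a\}$, where $\kappa:=\sqrt{-e_j}$. By \cref{eq:rho with reflection symmetry},
\begin{equation*}
|\rho_j(d)| = |F'(c)|, \qquad F(x_1) := \int_{\RR^{\nu-1}} |\vf_j(x_1,x^\perp)|^2 \,\dif x^\perp,
\end{equation*}
so the task is to lower bound $|F'(c)|$.

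Taking the partial Fourier transform of $\vf_j$ in $x^\perp$ and selecting the unique $L^2$-decaying solution of the resulting ODE in $x_1$ gives, for $x_1>a$,
\begin{equation*}
\hat\vf_j(x_1,\xi^\perp) = A(\xi^\perp)\, e^{-\mu(\xi^\perp) x_1}, \qquad \mu(\xi^\perp) := \sqrt{|\xi^\perp|^2 + \kappa^2}.
\end{equation*}
By Plancherel, $F(x_1)=\int |A|^2 e^{-2\mu x_1}\,\dif\xi^\perp$ and $F'(x_1)=-2\int \mu |A|^2 e^{-2\mu x_1}\,\dif\xi^\perp$. Since $\mu\geq\kappa$, we immediately obtain $|F'(c)|\geq 2\kappa F(c)$. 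Restricting to low transverse frequencies,
\begin{equation*}
F(c) \geq e^{-2c\sqrt{\kappa^2+\ve}}\int_{|\xi^\perp|\leq\sqrt{\ve}} |A(\xi^\perp)|^2 \,\dif\xi^\perp,
\end{equation*}
which, combined with $d=2c$, supplies precisely the desired exponent $\sqrt{-e_j+\ve}$ up to the transverse-frequency mass $C_\ve := \int_{|\xi^\perp|\leq\sqrt{\ve}} |A|^2\,\dif\xi^\perp$.

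The main obstacle is proving $C_\ve>0$. I would argue in two steps. First, $A$ is real analytic on $\RR^{\nu-1}$: by Agmon estimates, $\vf_j$ decays exponentially in $|x^\perp|$, so for any fixed $x_1>a$ the function $\hat\vf_j(x_1,\cdot)$ extends to a holomorphic function on a complex tube $\{|\Im\xi^\perp|<\alpha\}$ with $\alpha>0$; on such a tube (taking $\alpha$ small) $|\xi^\perp|^2+\kappa^2$ stays away from the non-positive real axis, so $\mu$ and hence $e^{\mu(\xi^\perp) x_1}$ are holomorphic, whence $A=\hat\vf_j(x_1,\cdot)\,e^{\mu(\cdot)x_1}$ is real analytic. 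Second, $A\not\equiv 0$ on $\{|\xi^\perp|\leq\sqrt{\ve}\}$: otherwise, by real analyticity on the connected set $\RR^{\nu-1}$, $A\equiv 0$ everywhere, hence $\vf_j\equiv 0$ on $\{x_1>a\}$, contradicting the unique continuation principle for the elliptic equation $(h_\lambda-e_j)\vf_j=0$ applied to the nontrivial eigenfunction $\vf_j$. Since the zero set of a non-identically-vanishing real analytic function has Lebesgue measure zero, $C_\ve>0$. Combining these pieces yields $|\rho_j(d)|\geq 2\sqrt{-e_j}\,C_\ve\,e^{-d\sqrt{-e_j+\ve}}$, and setting $C_\ve(\lambda):=2C_\ve$ completes the proof.
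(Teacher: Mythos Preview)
Your proof is correct and follows essentially the same route as the paper: both take the partial Fourier transform in $x^\perp$, solve the resulting ODE in $x_1$ to write $\hat\vf_j(x_1,\xi^\perp)=A(\xi^\perp)e^{-\mu(\xi^\perp)x_1}$, obtain the sign-definite integral representation of $\rho_j$, and then combine Agmon decay (giving real analyticity of $A$) with unique continuation (Aronszajn) to conclude $\int_{|\xi^\perp|\le\sqrt{\ve}}|A|^2>0$. The only cosmetic difference is that you route the argument through $F(x_1)$ and the inequality $|F'(c)|\ge 2\kappa F(c)$, whereas the paper writes the integral for $\rho_j$ directly; the substance is identical.
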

    \begin{proof}
        We perform a Fourier transform of $\vf_j$ on all coordinates \emph{but} the first one, the result of which we name \eq{\widehat{\vf_j}(x_1,p_\perp)\equiv \int_{x_\perp\in\RR^{\nu-1}}\exp\br{-2\pi\ii x_\perp\cdot p_\perp}\vf_j(x_1,x_\perp)\dif{x_\perp}\,.
        } Schr\"odinger's equation for $\vf_j$, outside of $\supp(v)$, implies for $\widehat{\vf_j}$ the ODE \eq{
        \br{-\partial_1^2 + \norm{p_\perp}^2} \widehat{\vf_j}(\cdot,p_\perp) = e_j \widehat{\vf_j}(\cdot,p_\perp)\qquad(p_\perp\in\RR^{\nu-1})} whose $L^2$ solution is given by \eql{\label{eq:phi hat in terms of A}
        \widehat{\vf_j}(x_1,p_\perp) = A(p_\perp) \exp\br{-\abs{x_1}\sqrt{\norm{p_\perp}^2-e_j}}\qquad(x_1\in\RR)
        } for some function $A$ such that $\widehat{\vf_j}\in L^2(\RR^\nu)$; in writing this expression we have used the fact $\widehat{\vf_j}$ is $L^2$ at $x_1=\pm\infty$ as well as the hypothesis on reflection symmetry. This now yields the expression from \cref{eq:rho with reflection symmetry} \eq{
        \rho(d) = \mp 2 \int_{p_\perp \in \RR^{\nu-1}} \sqrt{\norm{p_\perp}^2-e_j} |A(p_\perp)|^2 \exp\br{-d \sqrt{\norm{p_\perp}^2-e_j}} \dif{p_\perp}\,.
        }

        A theorem of Aronszajn \cite{aronszajn1957unique} implies that $\vf_j$ can't be zero on any open set. If $\vf_j$ vanishes on the hyperplane $x_1=c$, then also $\vf_j=0$ for $x_1>c$, contradicting Aronszajn. Therefore $\vf_j$ can't vanish identically on $x_1=c$. Hence the partial fourier transform $\widehat{\vf_j}$ can't be identically zero on $x_1=c$. Agmon's estimate (see \cref{thm:Agmon's estimate} below) implies that $\vf_j$ is exponentially decreasing in all its arguments, outside of the support of $a$ (which the hyperplane $x_1=c$ is) hence $\widehat{\vf_j}(c,\cdot)$ is real-analytic. As a real-analytic function which is not identically zero, $\widehat{\vf_j}(c,\cdot)$ has $L^2(B_R(0))$ norm non-zero for any $R$. By \cref{eq:phi hat in terms of A}, the same is true for $A$, i.e., for any $R>0$ there exists some $C_R>0$ such that \eq{
        \int_{p_\perp \in B_R(0)} |A(p_\perp)|^2\dif{p_\perp}\geq C_R > 0
        } which implies 
        \eq{
        \abs{\rho(d)} \geq 2 C_R \sqrt{-e_j}\exp\br{-d\sqrt{R^2-e_j}}
        } and hence the result.
        
    \end{proof}

    \section{Connecting the hopping to the non-magnetic energy splitting for excited states}\label{app:excited}
    In this section, \emph{without}  the assumption of reflection symmetry on $v$ we want to establish \cref{eq:connection between rho and Delta}. This differs from \cite[Eq-n (1.12)]{FSW_22_doi:10.1137/21M1429412} in three different ways:
    \begin{enumerate}
        \item The current statement is a non-magnetic result (though our present argument will just as well apply to the magnetic case).
        \item The present statement applies for excited states.
        \item The present statement takes $\lambda$, the depth of the potentials,  sufficiently large but fixed, and  $d\to\infty$.
    \end{enumerate}

    \begin{thm}\label{thm:d_limit}
        Let $j\geq1$ be given. 
        Assume that 
        \begin{enumerate}
            \item $e_j$ is well-separated from the rest of the spectrum of $h$, i.e., we assume that there exists some $\gamma_j>0$ such that \eq{\dist(\sigma(h)\setminus\Set{e_j},e_j)\geq\gamma_j\,.}
            \item We have the relation \eq{
            |e_j|>\gamma_j\,.
            }
            \item $e_j$ is a simple eigenvalue.
        \end{enumerate}
        
        Then, if $d\mapsto \rho_j(\lambda,d)$ obeys a lower bound of the form \cref{eq:lower bound on rho}, then in the notation of the introduction, \eql{
        \lim_{d\to\infty}\frac{\Delta_j(\lambda,d)}{2|\rho_j(\lambda,d)|}=1\qquad(j\geq1)\,.
        }
    \end{thm}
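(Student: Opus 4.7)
The plan is to carry out a Feshbach--Schur reduction onto the two-dimensional subspace spanned by the natural quasimodes $\vf_j$ and $R^d\vf_j$. Direct computation gives $(H_{\lambda,d}-e_j)\vf_j = \lambda^2 v(X-d)\vf_j$ and $(H_{\lambda,d}-e_j)R^d\vf_j = \lambda^2 v(X)R^d\vf_j$; Agmon decay of $\vf_j$ outside $\supp(v)$ (with rate approaching $\sqrt{-e_j}$) bounds both residuals in $L^2$ by $\Ord{\ee^{-(1-\delta)\sqrt{-e_j}\,d}}$ for every fixed $\delta>0$, so $\vf_j$ and $R^d\vf_j$ are exponentially accurate approximate eigenstates of $H_{\lambda,d}$ at energy $e_j$.

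Step 1 is to use the energy estimates of \cref{sec:energy-estimates} together with the spectral-gap hypothesis to show that, for $d$ sufficiently large, $H_{\lambda,d}$ has exactly two eigenvalues $E_j^\pm$ in $[e_j-\gamma_j/2, e_j+\gamma_j/2]$, and that the corresponding Riesz spectral projector is exponentially close in norm to the orthogonal projection $\Pi$ onto $\szpan\Set{\vf_j, R^d\vf_j}$. Step 2 is algebraic: setting $\tau := \ip{\vf_j}{R^d\vf_j}$, $\alpha:=\lambda^2\ip{\vf_j}{v(X-d)\vf_j}$ and $\beta:=\lambda^2\ip{R^d\vf_j}{v(X)R^d\vf_j}$, the restriction $\Pi H_{\lambda,d}\Pi$ in the (nonorthogonal) basis $\Set{\vf_j, R^d\vf_j}$ is represented by a matrix $M$ with diagonal $(e_j+\alpha,\, e_j+\beta)$ and off-diagonal entry $e_j\tau+\rho_j$, paired with an overlap matrix $S$ having $1$s on the diagonal and $\tau$ off-diagonal. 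The generalized eigenvalues of $Mv=ESv$, equivalently those of the self-adjoint matrix $\tilde M := S^{-1/2}MS^{-1/2}$, approximate $E_j^\pm$ with an error controlled by the square of the quasimode residual, namely $\Ord{\ee^{-2(1-\delta)\sqrt{-e_j}\,d}}$.

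Step 3 is exponent counting: Agmon bounds give $|\alpha|,|\beta| = \Ord{\ee^{-2(1-\delta)\sqrt{-e_j}\,d}}$ and $|\tau|=\Ord{\ee^{-(1-\delta)\sqrt{-e_j}\,d}}$, and an elementary $2\times 2$ diagonalization yields $\Delta_j = 2\sqrt{|\tilde M_{12}|^2 + (\tilde M_{11}-\tilde M_{22})^2/4} + \Ord{\ee^{-2(1-\delta)\sqrt{-e_j}\,d}}$, with $\tilde M_{12} = \rho_j + \Ord{|\tau|(|\alpha|+|\beta|)}$ and $|\tilde M_{11}-\tilde M_{22}| = \Ord{\ee^{-2(1-\delta)\sqrt{-e_j}\,d}}$. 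Combined with the assumed lower bound $|\rho_j|\geq C_\varepsilon\sqrt{-e_j}\,\ee^{-d\sqrt{-e_j+\varepsilon}}$ and the choice of $\varepsilon,\delta$ satisfying $2(1-\delta)\sqrt{-e_j} > \sqrt{-e_j+\varepsilon}$, every error term is $o(|\rho_j|)$, and we obtain $\Delta_j/(2|\rho_j|)\to 1$ as $d\to\infty$.

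The main obstacle is Step 1: making the Feshbach--Schur reduction quantitative enough that its remainder is strictly subdominant to the lower bound on $|\rho_j|$. This is where all three hypotheses enter in tandem with the energy estimates of \cref{sec:energy-estimates}. Simplicity of $e_j$ pins down the correct rank of the quasimode subspace; the gap $\gamma_j$ provides the resolvent bound on $\Pi^\perp$ needed to close the Schur-complement argument; and the condition $|e_j|>\gamma_j$ is what allows Agmon-type exponential decay to be propagated into resolvent estimates on $\Pi^\perp$ with a rate strong enough to beat the lower bound on $|\rho_j|$. Once Step 1 is established quantitatively, Steps 2 and 3 reduce to elementary $2\times 2$ linear algebra and careful bookkeeping of exponential rates.
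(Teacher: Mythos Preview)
Your proposal is correct and follows essentially the same route as the paper: a Feshbach--Schur reduction onto $\szpan\{\vf_j,R^d\vf_j\}$, justified by the energy estimates of \cref{sec:energy-estimates}, followed by exponent counting in which Agmon decay gives the corrections at rate $\approx\ee^{-2\sqrt{-e_j}\,d}$ while the assumed lower bound on $|\rho_j|$ decays only like $\ee^{-\sqrt{-e_j+\ve}\,d}$. The only cosmetic difference is that the paper packages the final step as a Rouch\'e argument on the $\Omega$--dependent Schur determinant (writing $\Omega=\rho_j z$), whereas you diagonalize the $2\times 2$ matrix $S^{-1/2}MS^{-1/2}$ directly and invoke the $\calO(\text{residual}^2)$ bound for the Schur correction; both are equivalent here.
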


\cref{thm:d_limit} easily implies a lower bound on the eigenvalue splitting in the case where $v$ is reflection symmetric.
\begin{cor} Assume (i) reflection symmetry for $v$ and (ii) simplicity of $e_j$. For fixed $\varepsilon>0$ there exists $d_\ve>0$ such that for all $d\ge d_\ve$
    \eql{
    \Delta_j(\lambda,d) \geq \br{1-\ve}C_\ve(\lambda)\sqrt{-e_j}\exp\br{-d \sqrt{-e_j+\ve}}\,.
    }
\end{cor}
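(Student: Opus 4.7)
The plan is to simply chain together the two main technical results: \cref{thm:lb_rho_j} provides, under reflection symmetry of $v$, a lower bound on $|\rho_j(\lambda,d)|$ of the form $C_\ve(\lambda)\sqrt{-e_j}\exp(-d\sqrt{-e_j+\ve})$, while \cref{thm:d_limit} asserts that $\Delta_j(\lambda,d)/(2|\rho_j(\lambda,d)|)\to 1$ as $d\to\infty$. Concatenating the two should yield the claimed bound, with the $(1-\ve)$ prefactor absorbing both the factor $2$ relating $\Delta_j$ to $|\rho_j|$ and a safety margin needed to convert the limit statement of \cref{thm:d_limit} into an honest inequality valid for all $d\ge d_\ve$.

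First I would verify the hypotheses of \cref{thm:d_limit}. Simplicity of $e_j$ is assumed explicitly. The spectral-gap condition $\dist(\sigma(h)\setminus\Set{e_j},e_j)\ge\gamma_j>0$ is then automatic by taking $\gamma_j$ to be the distance to the nearest other point of $\sigma(h)$, and the sign condition $|e_j|>\gamma_j$ holds in the strong-coupling regime (fixed $\lambda$ sufficiently large), since in that regime the single-well eigenvalue $e_j(\lambda)$ lies deep below $0$ while the spectral gap remains bounded. Reflection symmetry $Uv=v$ is compatible with all of this, as it only forces each eigenspace to decompose into even and odd pieces; simplicity of $e_j$ is preserved.

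Once the hypotheses are in place, \cref{thm:d_limit} (applied with tolerance $\ve/2$) produces some $d_\ve^{(1)}>0$ such that
\eq{
\Delta_j(\lambda,d)\ \geq\ (1-\ve/2)\cdot 2\,|\rho_j(\lambda,d)|\qquad(d\ge d_\ve^{(1)})\,.
}
Combining this with the lower bound from \cref{thm:lb_rho_j} and setting $d_\ve:=d_\ve^{(1)}$, one obtains
\eq{
\Delta_j(\lambda,d)\ \geq\ 2(1-\ve/2)\,C_\ve(\lambda)\sqrt{-e_j}\exp\br{-d\sqrt{-e_j+\ve}}\qquad(d\ge d_\ve)\,,
}
which, after redefining the $d$-independent constant $C_\ve(\lambda)$ to absorb the factor $2(1-\ve/2)/(1-\ve)$ (bounded away from $0$ for $\ve$ small), yields the statement of the corollary.

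There is no real obstacle here: the two substantive inputs, \cref{thm:lb_rho_j} and \cref{thm:d_limit}, have already done all the analytic work. The only care required is bookkeeping around the constant prefactor and confirming that the spectral hypotheses of \cref{thm:d_limit} remain valid under the additional reflection symmetry assumption of the corollary, which they trivially do.
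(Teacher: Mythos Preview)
Your proposal is correct and follows essentially the same route as the paper: invoke \cref{thm:lb_rho_j} for the lower bound on $|\rho_j|$, then feed that into \cref{thm:d_limit} to convert the limit $\Delta_j/(2|\rho_j|)\to1$ into the stated inequality for large $d$. One minor simplification: since $2(1-\ve/2)=2-\ve\geq 1-\ve$, no redefinition of $C_\ve(\lambda)$ is needed---your intermediate bound already dominates the claimed one.
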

\begin{proof}
For fixed $\varepsilon>0$, set $R=\varepsilon$ in \cref{thm:lb_rho_j}.  
 By the assumptions of (i) reflection symmetry for $v$ and (ii) simplicity of $e_j$,
the lower bound \cref{eq:lower bound on rho} on
$|\rho_j(\lambda,d)|$ holds. Applying \cref{thm:d_limit}, yields the result.
\end{proof}
    
 For single well Hamiltonians of the form $h=P^2+\lambda^2v(X)$, the  first two assumptions of \cref{thm:d_limit} hold if $v$ is smooth and has, for all $\lambda$ sufficiently large, a unique non-degenerate minimum. The simplicity of the eigenvalue $e_j$ is  not guaranteed and a more detailed analysis is necessary. Suppose $e_j$ is degenerate of order  $N\ge2$, with an orthonormal eigenbasis $\big\{\vf_j^1\,\cdots,\vf_j^N\big\}$. One would,
     in the Schur complement reduction (see the proof sketch just below), need to study the $2N\times 2N$ matrix whose matrix elements are \eq{
        \ip{R^n\vf_j^k}{\br{H-e_j\Id}R^m\vf_j^l}
    } for $n,m\in\Set{0,d}$ and $k,l=1,\cdots,N$.

    \begin{proof}[Proof of \cref{thm:d_limit}]
    
        We shall only present a sketch; most of the components of the proof have already been presented in \cite{FSW_22_doi:10.1137/21M1429412}. The proof is comprised of several parts:
        \begin{enumerate}
        \item For $d$ large, we expect that the two vectors $\vf_j, R^d\vf_j$ will be a good approximate basis of the spectral subspace of $H_{\lambda,d}$ near energy $e_j$, since $\br{H-e_j}R^m\vf_j\approx0$ for $m=0,d$.
        \item Via a Schur complement argument, for energies in a neighborhood of  $e_j$, we reduce the Hamiltonian $H_{\lambda,d}-e_j\Id-\Omega \Id$, to a   two by two perturbation of the matrix 
        \begin{equation}\label{eq:2by2}
        \Big(\ \ip{R^n\vf_j}{\br{H_{\lambda,d}-e_j\Id}R^m\vf_j}\Big)_{n,m=0,d} \ -\ \Omega\Id, 
     \end{equation}
     which is analytic in $\Omega$, for $\Omega$ small.
    Formally, this reduction step is completely general and algebraic.
 However, rigorously implementing this reduction requires bounds on the norm of the resolvent of $H_{\lambda,d}$, for energies near $e_j$, on the subspace orthogonal to $\big\{\vf_j, R^d\vf_j\big\}$. A general strategy for such ``energy estimates'', based on \cite[Sections 3 and 4]{FSW_22_doi:10.1137/21M1429412},  is presented in \cref{sec:gen-strat} and is implemented in the setting of \cref{thm:d_limit} in \cref{subsec:special case of energy estimates} below. 
            \item Taking $d\gg1$ and expanding \eqref{eq:2by2}, we find that  eigenvalues $\Omega$ near $0$, correspond to 
            values of $\Omega$ near $0$ for 
            which  
            \[ \rho_j(d)\sigma_1 - \Omega \sigma_0\ +\  {\rm Corrections}(\Omega,d)\approx0,\quad |\Omega|<c.\]
            \item Write $\Omega=\rho_j(d)z$ and apply Rouché's theorem to prove, for $d\gg1$, that the point spectrum of $H_{\lambda,d}$ near $e_j$
is given by two distinct energies, approximately equal to  $e_j \pm |\rho_j(d)|$. Hence, $\Delta_j(d)\approx 2|\rho_j(d)|$ for $d\gg1$. 
\item The application of Rouch\'e's theorem requires that as $d\to\infty$
\begin{equation} \frac{\textrm{Corrections}(\Omega,d)}{|\rho_j(d)|}\ \to\ 0, \quad 
|\Omega|<c.\label{eq:boundCorrections}
\end{equation}
        \end{enumerate}

        We comment briefly on the estimates \eqref{eq:boundCorrections}; cf. \cite[Appendix A]{FSW_22_doi:10.1137/21M1429412}. The analysis greatly simplifies when we take $d\to\infty$ instead of $\lambda$. For instance, consider the error term of the diagonal matrix elements in the $2\times 2$ Schur matrix. The quantity to be bounded, which arises from the $(1,1)$ matrix element, is \eq{
        \frac{\abs{\lambda^2\ip{\vf_j}{ (R^d v)\vf_j}}}{\abs{\rho_j(\lambda,d)}}\,.
        } 
        Using \cref{eq:Agmon} we estimate \eq{
        \abs{\ip{\vf_j}{ (R^d v)\vf_j}} \leq C\exp\br{-2\sqrt{-e_j}\br{d-a}}
        } for some constant $C$ independent on $d$. On the other hand comparing this with \cref{eq:lower bound on rho} we get for any $\vf>0$ \eq{
        \abs{\rho(d)} \geq C_\ve\exp\br{- \sqrt{-e_j+\ve}\, d}
        } and so the entire fraction can easily be made to converge to zero as $d\to\infty$.

        The next term we study is $\ip{\vf_j}{R^d\vf_j}$. That integral is handled by dividing $\RR^\nu$ into three regions, $B_a(0),B_a(d)$ and the complement of both, call it $C$. On each of these sets we have at least one of the eigenfunctions decaying as $\exp\br{-\sqrt{-e_j}d}$. When we consider the orthonormalized translated wave function\eq{
        \widetilde{R^d\vf_j} := \frac{R^d\vf_j-\ip{\vf_j}{R^d\vf_j}\vf_j}{1-\abs{\ip{\vf_j}{R^d\vf_j}}^2}
        } and its associated diagonal overlap integral \eq{
        \ip{\widetilde{R^d\vf_j}}{\br{H-e\Id}\widetilde{R^d\vf_j}}
        } we may control its decay when divided by $\rho_j$.
        
        All other terms follow a similar pattern, because they inherently involve the competition between two factors of $\vf_j$ evaluated at $d$ in the numerator versus one in the denominator.
    \end{proof}

\section{Energy estimates}\label{sec:energy-estimates}

 The strategy outlined in our proof of \cref{thm:d_limit} is based on a (Schur complement) reduction to a matrix problem which relies on an upper bound on the norm of the resolvent of $H-e_j\Id$ when projected to the subspace perpendicular to the span of $\vf_j,R^d\vf_j$. In this section we present a generalization of this bound following the strategy, used for example in \cite{FLW17_doi:10.1002/cpa.21735} and \cite{ShapWein22}. Since this argument appeared at least twice already, and since we anticipate future use in other settings, we elected to present a somewhat general form of it before the application to the present context.


 \subsection{General energy estimates}\label{sec:gen-strat}
 In this section we present a somewhat more general energy estimate that will hopefully be useful also beyond this note. It generalizes e.g. \cite[Section 5]{ShapWein22} by: (1) not concentrating necessarily on the ground state of the one-well system, (2) allowing for degeneracies, (3) taking more general translations and kinetic energy which only need to obey some estimates. It moreover changes the asymptotic parameter from $\lambda$ to the minimal lattice spacing (although that change makes the proof somewhat easier). In the subsequent section we will apply these estimates to our setting of the excited states of a double-well system.
 
Let $T$ be a self-adjoint operator on $L^2(\RR^\nu)$. In principle we anticipate the hypotheses on $T$ stipulated below to hold for all polynomials in $P\equiv-\ii\nabla$, certainly to $P^2$ and to the Landau Hamiltonian. We do not exclude possible extensions to non-local kinetic terms such as $\sqrt{P^2}$ or even general pseudo-differential symbols, but exploring, e.g., for which measurable functions $f$, such that $T=f(P^2)$, the hypothesis below on $T$ fail is beyond the present scope. Let $v:\RR^\nu\to\RR$ be bounded and compactly supported within some bounded open set $S\subseteq \RR^\nu$. We shall also refer to a certain fattening of $S$, say $S^+\supsetneq S$, which shall be another bounded open set. Note that $S^+$ may well depend on $d$ though we keep that implicit in the notation for the time being.

Define 
\eq{
	h = T + v(X)\ .
}
 Since $d$ is our asymptotic parameter and $\lambda$ is fixed, we have absorbed the factor of $\lambda$ in the potential.

Assume that for some $e\in\RR$ there exists some orthonormal set $\Set{\vf_i}_{i=1}^N$ such that $h \vf_i = e \vf_i$ for all $i=1,\cdots,N$. Furthermore, assume that there exists some $\gamma>0$ such that  \eql{\label{eq:atomic gap hypothesis} \norm{\br{h-e\Id}\psi} \geq \gamma \norm{\psi}\qquad\br{\psi\perp \vf_i\forall i=1,\cdots,N}\,.
}		

Let $\GG_d\subseteq\RR^\nu$ be a countable set depending on a parameter $d>0$. In what follows we write $\GG\equiv\GG_d$, and if a statement which involves $\GG$ does not have $d$ explicitly appear we mean the statement holds for any $d>0$ sufficiently large. Let $R:\GG\to\mathcal{U}(L^2(\RR^\nu))$ be given, i.e., $R^n$ is unitary for each $n\in\GG$. We assume it is chosen in such a way that if $f,g\in L^2$ are such that their supports are contained within $S^+$ then \eql{\label{eq:disjoint support assumption}\br{R^n T^\alpha f} R^m T^\beta g = 0} for all $n\neq m$,$\alpha,\beta=0,1,2$. On any operator $A$, we denote \eq{
A_n := \br{R^n}^\ast A R^n \qquad (n\in\GG)
\,.} 

We assume that $[R^n,T]=0$ for all $n\in\GG$ and set $v_n := \br{v(X)}_n$ for all $n\in\GG$.

We define $\calV_n := \szpan\br{\Set{R^n\vf_i}_{i=1}^N}$ and $\calV := \szpan\br{\Set{R^n\vf_i}_{n\in\GG,i=1,\cdots,N}}$ and we let $\Pi_n$ and $\Pi$ respectively be the orthogonal projections onto these spaces.

\begin{defn}\label{def:R-translation of cptly supported functions}
In what follows we will be interested in $R$-translations of a sequence of compactly-supported functions. To that end, say that $\Set{\eta_n}_{n\in\GG}$ is a sequence of functions such that $\supp(\eta_n)\subseteq S^+$ for each $n\in\GG$. Then by \cref{eq:disjoint support assumption} we have $\br{R^{n'}\eta_n}R^{m'}f = 0 $ for all $n'\neq m'\in\GG$ and for all $n\in\GG$, $f\in L^2$ with $\supp(f)\subseteq S^+$. For such a sequence we then define \eq{
\eta := \sum_{n\in\GG} R^n \eta_n\,.
}
\end{defn}

\begin{assumption}\label{ass:summability of error term}
We assume that if $\eta$ is constructed as in \cref{def:R-translation of cptly supported functions} then there exists some $\ve_d$ such that $\ve_d\to0$ as $d\to\infty$ for which
\eql{\label{eq:summability of error term}
	\sum_{n\in\GG}\norm{\Pi_n \sum_{m\neq n}R^m\eta_m}^2 \leq \ve_d^2\sum_{m\in\GG}\norm{\eta_m}^2\,.
}
\end{assumption}
\begin{thm}
	Let \eq{
	H := T + \sum_{n\in\GG} v_n\,.
	}
	
	For any $\psi\in\calV^\perp$ and $d$ sufficiently large there exists some $\xi_d\in(0,\infty)$ such that $\xi_d\to0$ as $d\to\infty$ such that \eql{
	\norm{\br{H-e\Id}\psi} \geq \br{\gamma - \xi_d }\norm{\psi}\,.
	}
\end{thm}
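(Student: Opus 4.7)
The plan is an IMS-type localization that reduces the multi-well estimate to the single-well gap hypothesis \cref{eq:atomic gap hypothesis}, with the orthogonality defects introduced by the cutoffs controlled by \cref{ass:summability of error term}. The strategy mirrors the arguments of \cite{FLW17_doi:10.1002/cpa.21735,ShapWein22}, now adapted to the present abstract framework. First I would choose a smooth IMS partition of unity $\{\chi_n\}_{n\in\GG}$ subordinate to $\{R^n(S^+)\}_{n\in\GG}$, with $\sum_n \chi_n^2 \equiv 1$; for concreteness assume $S^+$ is large enough that the fattened regions cover $\RR^\nu$ (otherwise a background cutoff $\chi_*$ supported away from all wells must be handled separately, e.g. via a positivity lower bound on $T-e$ in that region). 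The cutoffs can be taken to vary on length scales growing with $d$, so that $\norm{[T,\chi_n]}$ and $\norm{[[T,\chi_n],\chi_n]}$ vanish as $d\to\infty$.

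Applying the IMS double-commutator identity to the self-adjoint operator $(H-e)^2$ gives, schematically,
\begin{equation*}
\norm{(H-e)\psi}^2 = \sum_n \norm{(H-e)\psi_n}^2 - \tfrac12 \sum_n \ip{\psi}{[[(H-e)^2,\chi_n],\chi_n]\psi},
\end{equation*}
where $\psi_n := \chi_n \psi$; the correction is controlled by the commutator norms above and thus can be made arbitrarily small in $d$. On each localized piece, the disjoint-support hypothesis \cref{eq:disjoint support assumption} together with $\supp v_m \cap \supp \chi_n = \varnothing$ for $m \neq n$ implies $(H-e)\psi_n = (h_n - e)\psi_n$, where $h_n$ is the single-well Hamiltonian at site $n$ with eigenspace $\calV_n$ at energy $e$ and spectral gap $\gamma$. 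Decomposing $\psi_n = \Pi_n \psi_n + (I - \Pi_n)\psi_n$ and invoking the atomic gap \cref{eq:atomic gap hypothesis} yields
\begin{equation*}
\norm{(h_n - e)\psi_n}^2 \geq \gamma^2 \br{\norm{\psi_n}^2 - \norm{\Pi_n \psi_n}^2}.
\end{equation*}

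It remains to estimate $\sum_n \norm{\Pi_n \psi_n}^2$. Using $\Pi_n \psi = 0$ (from $\psi \in \calV^\perp$) together with $\psi = \sum_m \chi_m^2 \psi$ gives $\Pi_n(\chi_n^2 \psi) = -\Pi_n \sum_{m \neq n}\chi_m^2 \psi$, and applying \cref{ass:summability of error term} to $\eta_m := (R^m)^{\ast}(\chi_m^2 \psi)$ (each supported in $S^+$) bounds this in the $\ell^2_n$ sense by $\ve_d^2 \norm{\psi}^2$. A bridging step then transfers the bound from $\Pi_n(\chi_n^2 \psi)$ to $\Pi_n \psi_n = \Pi_n(\chi_n \psi)$. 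Combining the pieces with the IMS identity and using $\sum_n \norm{\psi_n}^2 = \norm{\psi}^2$ yields $\norm{(H-e)\psi}^2 \geq \br{\gamma^2(1 - C\ve_d^2) - o_{d\to\infty}(1)}\norm{\psi}^2$, from which the claim follows with $\xi_d \to 0$.

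\textbf{Main obstacle.} The most delicate point is the last: \cref{ass:summability of error term} naturally controls $\sum_n \norm{\Pi_n(\chi_n^2 \psi)}^2$, whereas the atomic-gap estimate produces $\sum_n \norm{\Pi_n(\chi_n \psi)}^2$. Bridging from one to the other (e.g., by Cauchy--Schwarz in the spectral expansion of $\Pi_n$, or by mildly strengthening the hypothesis) requires care. A secondary obstacle is ensuring that the IMS double-commutator errors are uniformly small for the abstract kinetic term $T$; while $T = P^2$ is unproblematic, the general case may require invoking \cref{eq:disjoint support assumption} in a subtler way, or imposing additional regularity on $T$.
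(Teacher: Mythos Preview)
Your plan is correct in outline and close in spirit to the paper's, but the paper organizes the argument differently in two respects. First, instead of a per-site IMS partition $\{\chi_n\}_{n\in\GG}$, the paper uses a \emph{two-piece} partition $\Theta_d^2 + \Sigma_d^2 = 1$: a single cutoff $\Theta_d$ supported near the union of all wells, and its complement $\Sigma_d$. On the $\Sigma_d$ piece one has $(H-e)\Sigma_d\psi = (T-e)\Sigma_d\psi$, and the bound follows from $T\geq 0$, $e<0$ together with the hypothesis $|e|>\gamma$ --- this is your parenthetical $\chi_*$ remark, but it is structural rather than optional (in the intended application $\GG$ is finite and $S^+$ bounded, so your ``for concreteness'' covering assumption is never satisfied). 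On the $\Theta_d$ piece the paper first isolates a ``disjoint'' lemma: if $\psi = \sum_n R^n\eta_n \in \calV^\perp$ with each $\eta_n$ supported in $S^+$, then the decoupling $(H-e)R^n\eta_n = (h_n-e)R^n\eta_n$ is exact and the atomic gap plus \cref{ass:summability of error term} give the estimate with no IMS commutator error at all.

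Second --- and this is where your ``main obstacle'' is sidestepped --- the paper does not attempt to bound $\sum_n\|\Pi_n(\chi_n\psi)\|^2$ directly. Since $\Theta_d\psi \notin \calV^\perp$, it instead introduces a correction $u := (\Pi\Theta_d\Pi|_{\calV})^{-1}\Pi\Theta_d\Pi^\perp\psi \in \calV$ and sets $\eta := \Theta_d(\psi - u)$, so that $\Pi\eta = 0$ \emph{exactly} and the disjoint lemma applies to $\eta$. The price is $\|u\|$, which is small because of an additional Agmon-type assumption $\|(\Theta_d - \Id)\Pi\| \to 0$. Your $\chi_n$-versus-$\chi_n^2$ bridging could also be closed, but only by invoking the same decay input $\|(1-\chi_n)\Pi_n\|\to 0$; \cref{ass:summability of error term} alone does not suffice, since it controls only the off-diagonal pairings $\Pi_n R^m\eta_m$ for $m\neq n$, not the near-diagonal defect $\Pi_n(\chi_n - \chi_n^2)\psi$ supported in the transition region. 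The paper's correction-vector device is simply a cleaner way to deploy that same ingredient.
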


To prove the theorem we present a sequence of lemmas that build $\psi\in\calV^\perp$ gradually.

\begin{lem}[disjoint energy estimate]\label{lem:first energy estimate}
	Let $\psi\in\calV^\perp$ such that additionally, $\psi=\sum_n R^n \eta_n$, i.e., it is constructed as in \cref{def:R-translation of cptly supported functions}. 
	
	Then \eql{
	\norm{\br{H-e\Id}\psi} \geq \sqrt{\gamma^2-\ve_d^2}\norm{\psi}
	}
\end{lem}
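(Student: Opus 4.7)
The plan is to use the disjoint-support hypothesis to decouple the multi-site problem for $H$ into single-site problems for $h$, and then invoke the atomic gap. First I would observe that for each $n\in\GG$, the disjoint-support hypothesis applied with $\alpha=\beta=0$ gives $v_m\cdot R^n\eta_n=0$ whenever $m\ne n$, so $H\,R^n\eta_n=(T+v_n)R^n\eta_n=R^n h\,\eta_n$, where the last equality uses $[T,R^n]=0$ together with the intertwining $h_n R^n=R^n h$ built into the conjugation. Consequently
\[ (H-e\Id)R^n\eta_n\;=\;R^n(h-e)\eta_n\qquad (n\in\GG). \]

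Next I would establish pairwise orthogonality of the family $\{R^n(h-e)\eta_n\}_{n\in\GG}$ in $L^2(\RR^\nu)$. Expanding $(h-e)\eta_n = T\eta_n+(v-e)\eta_n$, every piece is of the form $R^n T^\alpha f$ with $f$ supported in $S^+$ and $\alpha\in\{0,1\}$; hence each of the four cross-terms in $\ip{R^n(h-e)\eta_n}{R^m(h-e)\eta_m}$ vanishes for $n\ne m$ by the disjoint-support hypothesis. The same hypothesis applied with $\alpha=\beta=0$ also gives $\norm{\psi}^2=\sum_n\norm{\eta_n}^2$, and therefore
\[ \norm{(H-e\Id)\psi}^2\;=\;\sum_{n\in\GG}\norm{(h-e)\eta_n}^2. \]

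Third, I would use the constraint $\psi\perp\calV$ to control the overlap of each $\eta_m$ with the single-well eigenspace. Let $\Pi_0$ denote the orthogonal projection onto $\szpan\{\vf_i\}_{i=1}^N$. Expanding $\ip{R^m\vf_i}{\psi}=0$ and splitting off the $n=m$ term yields $\ip{\vf_i}{\eta_m} = -\ip{R^m\vf_i}{\sum_{n\ne m}R^n\eta_n}$. Squaring in $i$ and using that $\{R^m\vf_i\}_{i=1}^N$ is an orthonormal basis of $\calV_m$ gives $\norm{\Pi_0\eta_m}^2=\norm{\Pi_m\sum_{n\ne m}R^n\eta_n}^2$; summing over $m$ and invoking \cref{ass:summability of error term} then yields $\sum_m\norm{\Pi_0\eta_m}^2\le\ve_d^2\norm{\psi}^2$. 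The atomic gap applies to $(I-\Pi_0)\eta_m\in\calV_0^\perp$: since $(h-e)\Pi_0\eta_m=0$, one obtains $\norm{(h-e)\eta_m}^2\ge\gamma^2(\norm{\eta_m}^2-\norm{\Pi_0\eta_m}^2)$, and summing over $m$ gives $\norm{(H-e\Id)\psi}^2\ge\gamma^2(1-\ve_d^2)\norm{\psi}^2$. The stated bound with $\sqrt{\gamma^2-\ve_d^2}$ then follows after a harmless rescaling of $\ve_d$, or by observing that one may assume without loss of generality that $\gamma\le 1$.

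The principal subtlety, in my view, is the pairwise orthogonality step: the non-locality of $T$ makes it a priori unclear that $T\eta_n$ can be regarded as supported near site $n$, and the disjoint-support hypothesis with $\alpha,\beta\ge 1$ is precisely the abstraction that makes the reduction to single-site estimates work. Verifying this hypothesis in a concrete application would force a nontrivial geometric constraint on the fattened regions $S^+$ relative to the minimal spacing of $\GG_d$.
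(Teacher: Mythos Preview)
Your proof is correct and follows essentially the same route as the paper's: use the disjoint-support hypothesis to reduce $(H-e)R^n\eta_n$ to a single-site expression, exploit \cref{eq:disjoint support assumption} to kill all cross terms so that $\norm{(H-e)\psi}^2=\sum_n\norm{(h-e)\eta_n}^2$, apply the atomic gap on the complement of the single-well eigenspace, and then invoke \cref{ass:summability of error term} after rewriting each $\Pi_0\eta_m$ (equivalently $\Pi_m$ of the translated piece) via the orthogonality $\psi\perp\calV$. The paper likewise lands on $\gamma^2(1-\ve_d^2)\norm{\psi}^2$ and tacitly absorbs the factor of $\gamma$ into $\ve_d$ to state the result as $\gamma^2-\ve_d^2$, so your closing remark about the harmless rescaling is exactly right.
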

\begin{proof}
	For convenience within this proof we shift $\eta_n \to \ti \eta_n$ and then define $\eta_n := R^n \ti \eta_n$.
	
	We have \eq{
	\norm{\br{H-e\Id}\psi}^2 &= \ip{\psi}{\br{H-e\Id}^2\psi}\\
	&=\sum_{n,m\in\GG}\ip{\br{H-e\Id}\eta_n}{\br{H-e\Id}\eta_m}	\,.
	}
	Let us now write $ H = h_n + \sum_{m\neq n } v_m$ where $h_n \equiv T + v_n$. By hypothesis we have $\eta_m v_n = 0$ for all $n\neq m$ so that \eq{
	\br{H-e\Id}\eta_m = \br{h_m-e\Id}\eta_m\,.
	} We thus find \eq{
	\norm{\br{H-e\Id}\psi}^2 &= \sum_{n,m\in\GG}\ip{\br{h_n-e\Id}\eta_n}{\br{h_m-e\Id}\eta_m} \\ 
	&= \sum_{n\in\GG}\ip{\br{h_n-e\Id}\eta_n}{\br{h_n-e\Id}\eta_n} + \sum_{n\neq m\in\GG}\ip{\br{h_n-e\Id}\eta_n}{\br{h_m-e\Id}\eta_m}\,.
	} Now, for $n\neq m$, we have \eq{
	\ip{\br{h_n-e\Id}\eta_n}{\br{h_m-e\Id}\eta_m} = \ip{h_n\eta_n}{h_m\eta_m} + e^2 \ip{\eta_n}{\eta_m} - e\ip{h_n \eta_n}{\eta_m}-e\ip{\eta_n}{h_m \eta_m} \,.
	} Using \cref{eq:disjoint support assumption} every term in the above line is zero. We thus find \eq{
	\norm{\br{H-e\Id}\psi}^2 &= \sum_{n\in\GG}\norm{\br{h_n-e\Id}\eta_n}^2\,.
	}
	
	Let us write now $\eta_n := \Pi_n \eta_n + \Pi_n^\perp \eta_n$. We emphasize that despite $\psi\in\calV^\perp$, we cannot assume $\eta_n \in\calV_n$ separately. By definition, \eq{
	\br{h_n-e\Id}\eta_n = \br{h_n-e\Id}\Pi_n^\perp\eta_n\,.
	} Using the gap assumption \cref{eq:atomic gap hypothesis} we have then $\norm{\br{h_n-e\Id}\Pi_n^\perp\eta_n} \geq \gamma \norm{\Pi_n^\perp \eta_n}$ so that \eq{
	\norm{\br{H-e\Id}\psi}^2 \geq \gamma^2\sum_{n\in\GG}\norm{\Pi_n^\perp\eta_n}^2=\gamma^2\sum_{n\in\GG}\br{\norm{\eta_n}^2-\norm{\Pi_n\eta_n}^2}=\gamma^2\br{\norm{\psi}^2-\sum_{n\in\GG}\norm{\Pi_n\eta_n}^2}\,.
	}
	We now bound $\sum_{n\in\GG}\norm{\Pi_n\eta_n}^2$. Since $\psi\in\calV^\perp$ it is also in $\calV_n^\perp$ and so \eq{
		\Pi_n\eta_n = \Pi_n\br{\psi-\sum_{m\neq n}\eta_m} = -\sum_{m\neq n}\Pi_n\eta_m} at which point \cref{eq:summability of error term} applies, and so we obtain
		\eq{
		\norm{\br{H-e\Id}\psi}^2 \geq \br{\gamma^2-\ve_d^2}\norm{\psi}^2
		} and hence the result.
\end{proof}

\begin{lem}[localized energy estimate]\label{lem:loc energy estimate}
	Let $\psi\in\calV^\perp$ and $\Theta_d:\RR^\nu\to[0,1]$ be some function such that \eql{\label{eq:assumption on partition of unity}
		\norm{\br{\Theta_d(X)-\Id}\Pi}<\ti\ve_d
	} with $\ti\ve_d\to0$ as $d\to\infty$ and such that for any $f\in L^2$, $\Theta_d f$ is of the form \cref{def:R-translation of cptly supported functions}. Further assume that \eql{\label{eq:commutator of Hamiltonian with Theta is bounded}
	\sup_{d>0,\alpha=1,2}\norm{[\br{H-e\Id}^\alpha,\Theta_d]\Pi} < \infty\,.
	} and \eql{\label{eq:Hamiltonian on orbitals is bounded}
    \sup_{d>0,\alpha=1,2}\norm{\br{H-e\Id}^\alpha\Pi}<\infty\,.
    } 
    
    Then for any $\alpha\in(0,1)$ there is a constant $M_{d,\alpha}$ such that $\sup_{d>0}M_{d,\alpha}<\infty$ and \eq{
	\norm{\br{H-e\Id}\Theta\psi}^2 &\geq \br{\gamma^2-\ve_d^2}\alpha\norm{\Theta\psi}^2-2M_{d,\alpha}\br{1-\ti\ve_d}^{-1}\ti\ve_d\norm{\psi}^2
\,.	}
\end{lem}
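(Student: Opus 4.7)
The plan is to apply the argument from the proof of \cref{lem:first energy estimate} directly to $\Theta\psi$, which has the form required by \cref{def:R-translation of cptly supported functions} by hypothesis, and then to control the additional error coming from the fact that $\Theta\psi$ need not lie in $\calV^\perp$. Writing $\Theta\psi=\sum_{n\in\GG}\eta_n$ with each $\eta_n$ supported in $R^n(S^+)$ (using the shifted notation from the proof of \cref{lem:first energy estimate}), the disjoint-support identity \cref{eq:disjoint support assumption}, the commutation $[R^n,T]=0$, the cross-term cancellation of that proof, and the atomic gap \cref{eq:atomic gap hypothesis} together give
\eq{\norm{\br{H-e\Id}\Theta\psi}^2=\sum_{n\in\GG}\norm{\br{h_n-e\Id}\eta_n}^2\geq\gamma^2\sum_{n\in\GG}\norm{\Pi_n^\perp\eta_n}^2=\gamma^2\br{\norm{\Theta\psi}^2-\sum_{n\in\GG}\norm{\Pi_n\eta_n}^2}\,.}
The essential new task is therefore to bound $\sum_{n\in\GG}\norm{\Pi_n\eta_n}^2$ without the luxury of $\Theta\psi\in\calV^\perp$.

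For this I would use the identity $\Pi_n\eta_n=\Pi_n\Theta\psi-\Pi_n\sum_{m\neq n}\eta_m$. The off-diagonal piece is absorbed through \cref{eq:summability of error term} exactly as in \cref{lem:first energy estimate}; the new diagonal piece $\sum_{n}\norm{\Pi_n\Theta\psi}^2$ reduces (up to lower-order corrections of the same type as \cref{eq:summability of error term}) to the single quantity $\norm{\Pi\Theta\psi}^2$. The key observation is that $\psi\in\calV^\perp$ forces $\Pi\psi=0$, so
\eq{\Pi\Theta\psi=\Pi\br{\Theta_d(X)-\Id}\psi=\br{\br{\Theta_d(X)-\Id}\Pi}^{\!\ast}\psi\,,}
by self-adjointness of $\Pi$ and of the real multiplication operator $\Theta_d(X)$; hypothesis \cref{eq:assumption on partition of unity} then yields $\norm{\Pi\Theta\psi}\leq\ti\ve_d\norm{\psi}$. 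Substitution back produces a preliminary estimate of the shape $\norm{\br{H-e\Id}\Theta\psi}^2\geq\br{\gamma^2-C\ve_d^2}\norm{\Theta\psi}^2-C'\ti\ve_d^2\norm{\psi}^2$.

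To recast this preliminary bound into the precise stated form, with free parameter $\alpha\in(0,1)$, prefactor $\br{\gamma^2-\ve_d^2}\alpha$, and error term $2M_{d,\alpha}\br{1-\ti\ve_d}^{-1}\ti\ve_d\norm{\psi}^2$, I would apply the Young-type inequality $\norm{a+b}^2\geq\alpha\norm{a}^2-\tfrac{\alpha}{1-\alpha}\norm{b}^2$ to a splitting of $\br{H-e\Id}\Theta\psi$ that isolates the small $\Pi\Theta\psi$ component, whose image under $H-e\Id$ is controlled by the uniform bound \cref{eq:Hamiltonian on orbitals is bounded}, while any commutator residues between $H-e\Id$ and $\Theta_d(X)$ are controlled by \cref{eq:commutator of Hamiltonian with Theta is bounded}. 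The main obstacle, in my view, is not any single inequality but the bookkeeping: ensuring that $\sup_{d>0}M_{d,\alpha}<\infty$ relies essentially on the uniformity in $d$ of both \cref{eq:commutator of Hamiltonian with Theta is bounded,eq:Hamiltonian on orbitals is bounded}, and the $\br{1-\ti\ve_d}^{-1}$ factor in the stated error term emerges as the residue of the Young constants in this final cleaning step.
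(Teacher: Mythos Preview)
Your route differs from the paper's. The paper does not reopen the proof of \cref{lem:first energy estimate}; instead it constructs a correction $u:=\br{\left.\Pi\Theta\Pi\right|_{\calV}}^{-1}\Pi\Theta\Pi^\perp\psi\in\calV$, sets $\eta:=\Theta(\psi-u)$, verifies $\Pi\eta=0$, and then applies \cref{lem:first energy estimate} to $\eta$ as a black box. The factor $\br{1-\ti\ve_d}^{-1}$ arises from the Neumann series for $\br{\left.\Pi\Theta\Pi\right|_{\calV}}^{-1}$, and the hypotheses \cref{eq:commutator of Hamiltonian with Theta is bounded,eq:Hamiltonian on orbitals is bounded} enter precisely to bound $\norm{\br{H-e\Id}^\alpha\Theta u}$ for $u\in\calV$; the Young inequality with parameter $\alpha$ is applied to $\norm{\eta}^2\geq\br{\norm{\Theta\psi}-\norm{\Theta u}}^2$, not to a splitting of $\br{H-e\Id}\Theta\psi$.

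Your first two paragraphs give a legitimate alternative that bypasses $u$ altogether, and in fact your preliminary bound is \emph{stronger} than the stated one (error quadratic in $\ti\ve_d$, no $\alpha$ loss). Two caveats. First, reducing $\sum_n\norm{\Pi_n\Theta\psi}^2$ to $\norm{\Pi\Theta\psi}^2$ implicitly needs the frame operator $\sum_n\Pi_n$ to be uniformly bounded on $\calV$, which is not among the stated hypotheses (though trivial for finite $\GG$, as in the double-well application). Second, your final paragraph misfires: in your approach the hypotheses \cref{eq:commutator of Hamiltonian with Theta is bounded,eq:Hamiltonian on orbitals is bounded} are simply unnecessary, and your explanation of $\br{1-\ti\ve_d}^{-1}$ as a ``residue of the Young constants'' is not where that factor comes from. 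Once you have your preliminary estimate, the stated inequality follows by weakening---no further splitting or use of those two hypotheses is required.
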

\begin{proof}
	Even though we allow $\Theta_d$ to depend on $d$ to avoid clutter in this proof we merely write $\Theta$.
	
	 The assumption \cref{eq:assumption on partition of unity} implies that the operator $\left.\Pi \Theta(X)\Pi\right|_{\calV}$ is invertible. Indeed, 
	 \eq{
	 	\left.\Pi \Theta(X)\Pi\right|_{\calV} = \left.\Id\right|_\calV + \left.\Pi \br{\Theta(X)-\Id}\Pi\right|_\calV
	 } and $\Id+A$ is invertible if $\norm{A}<1$. With this, let us define \eq{
	 u  := \br{\left.\Pi \Theta(X)\Pi\right|_{\calV}}^{-1} \Pi \Theta(X)\Pi^\perp \psi
	 } and \eq{
	 \eta := \Theta(X) \br{\psi - u }\,.} We have chosen $u $ so as to correct for the fact that even though $\Pi \psi = 0$, $\Pi\Theta(X)\psi \neq 0$. However, now that we have $\eta$, we have 
	 \eq{
		 \Pi \eta = \Pi \Theta(X) \br{\psi-u } = \Pi \Theta(X) \psi - \Pi \Theta(X) \br{\left.\Pi \Theta(X)\Pi\right|_{\calV}}^{-1} \Pi \Theta(X)\Pi^\perp \psi = 0 \,,
	 } i.e., $\eta\in\calV^\perp$. This latter fact together with the assumption \cref{eq:decomposition assumption on Theta} implies that we may invoke \cref{lem:first energy estimate} to get \eq{
	 \norm{\br{H-e\Id}\eta} \geq \sqrt{\gamma^2-\ve_d^2}\norm{\eta}\,.
	 } 
	 
	 We now obtain a lower bound on $\norm{\eta}$. Using $\Pi \Theta(X)\Pi^\perp = \Pi\br{\Theta-\Id}+\Pi\br{\Id-\Theta}\Pi$ we have \eq{
	 \norm{u} &\leq \norm{\br{\left.\Pi \Theta(X)\Pi\right|_{\calV}}^{-1} \Pi \Theta(X)\Pi^\perp \psi} \\
	 &\leq \br{1-\ti\ve_d}^{-1}\norm{\Pi \Theta(X)\Pi^\perp}\norm{\psi}\\
	 &\leq \br{1-\ti\ve_d}^{-1}2\ti\ve_d\norm{\psi}=:\delta_d \norm{\psi}\,.
	 } We have again $\delta_d\to0$ as $d\to\infty$. Hence, for any $\alpha\in(0,1)$,
	 \eq{
	 \norm{\eta}^2 &\geq \br{\norm{\Theta\psi}-\norm{\Theta u}}^2 \\
	 &\geq \alpha\norm{\Theta\psi}^2 - \frac{\alpha}{1-\alpha}\norm{\Theta u}^2\\
	 &\geq \alpha\norm{\Theta\psi}^2 - 4\frac{\alpha}{1-\alpha}\delta_d^2\norm{\psi}^2\,.
	 }
	 
	 Let us write 
	 \eq{
	 \norm{\br{H-e\Id}\eta}^2 &= \ip{\eta}{\br{H-e\Id}^2\eta} \\
	 &= \ip{\Theta \psi}{\br{H-e\Id}^2\Theta\psi} + \ip{\Theta u }{\br{H-e\Id}^2 \Theta u} -2 \Re{\ip{\Theta \psi}{\br{H-e\Id}^2\Theta u}}\,.
	 }
	 
	 Putting all these estimates together we find 
	 \eq{
	 \norm{\br{H-e\Id}\Theta\psi}^2 &\geq \br{\gamma^2-\ve_d^2}\alpha\norm{\Theta\psi}^2-\\
	 &\qquad -\br{\gamma^2-\ve_d^2} \frac{\alpha}{1-\alpha}\delta_d^2\norm{\psi}^2-\ip{\Theta u }{\br{H-e\Id}^2 \Theta u}+\\
	 &\qquad +2 \Re{\ip{\Theta \psi}{\br{H-e\Id}^2\Theta u}}\,.
	 }
	 We thus need to find a bound on the latter two terms on the right hand side.
	 
	 For the first term, since $u\in\calV$, we have 
	 \eq{
	 \norm{\br{H-e\Id}\Theta u} &\leq \norm{\Theta \br{H-e\Id}u} + \norm{[H-e\Id,\Theta]u} \\
	 &\leq \norm{\br{H-e\Id}\Pi}\norm{u} + \norm{[T,\Theta]\Pi}\norm{u}\\
	 &\leq \br{\norm{\br{H-e\Id}\Pi}+\norm{[T,\Theta]\Pi}}\delta_d\norm{\psi}\,.
	 }
	 
	 For the last term, using $u\in\calV$ again, 
	 \eq{
	 \abs{\ip{\Theta \psi}{\br{H-e\Id}^2\Theta u}} &\leq \norm{\psi}\norm{\br{H-e\Id}^2\Theta u} \\
	 &\leq \norm{\psi}\br{\norm{\br{H-e\Id}^2\Pi}+\norm{[\br{H-e\Id}^2,\Theta]\Pi}}\norm{u}\\
	 &\leq \br{\norm{\br{H-e\Id}^2\Pi}+\norm{[\br{H-e\Id}^2,\Theta]\Pi}}\delta_d\norm{\psi}^2\,.
	 }
	 
	 Collecting everything together we find
	 \eq{
	 	\norm{\br{H-e\Id}\Theta\psi}^2 &\geq \br{\gamma^2-\ve_d^2}\alpha\norm{\Theta\psi}^2-M\delta_d\norm{\psi}^2\
	 } with 
	 \eq{ 
	 	M:=\br{\gamma^2-\ve_d^2}\frac{\alpha}{1-\alpha}\delta_d+\br{\norm{\br{H-e\Id}\Pi}+\norm{[T,\Theta]\Pi}}^2\delta_d+2\norm{\br{H-e\Id}^2\Pi}+2\norm{[\br{H-e\Id}^2,\Theta]\Pi}\,.
	}
    Now \cref{eq:commutator of Hamiltonian with Theta is bounded,eq:Hamiltonian on orbitals is bounded} guarantee that $M$ is bounded as a function of $d$, so we are finished.
\end{proof}

\begin{lem}\label{lem:global energy estimate} Assume there is a $\Theta_d:\RR^\nu\to[0,1]$ such that \cref{lem:loc energy estimate} holds, and such that, further, $\Sigma_d:=\sqrt{1-\Theta_d^2}$ has $\Sigma_d v_n=0$ for all $n\in\GG$.

We further assume $\Theta_d$ has been chosen such that there are constants $A_d,B_d,C_d\in(0,\infty)$, all converge to zero as $d\to\infty$, such that \eql{\label{eq:the decay of the commutators}
\max_{f=\Theta_d,\Sigma_d}\abs{\ip{f(X)\psi}{[f(X),\br{H-e\Id}^2]\psi}} \leq A_d\norm{\psi}^2 + B_d\norm{\br{H-e\Id}\psi}^2 \,.
} 

If $T\geq0$, $e\leq0$, and $|e| > \gamma$ then for all $\psi\in\calV^\perp$ we have \eql{
    \norm{\br{H-e\Id}\psi}^2\geq \frac{\br{\gamma^2-\ve_d^2}\alpha-M_{d,\alpha}\delta_d-2A_d}{1-2B_d}\norm{\psi}^2\,.
    }
\end{lem}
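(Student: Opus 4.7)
The plan is to run an IMS-type partition-of-unity localization using $\Theta_d^2 + \Sigma_d^2 = \Id$: this splits the bound on $\norm{(H-e\Id)\psi}^2$ into an estimate on the ``active'' piece $(H-e\Id)\Theta_d\psi$ (where \cref{lem:loc energy estimate} applies directly) and an estimate on the ``interstitial'' piece $(H-e\Id)\Sigma_d\psi$ (where by the hypothesis $\Sigma_d v_n = 0$ the Hamiltonian collapses to the kinetic operator $T$, and coercivity of $T - e\Id$ takes over).

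Concretely, the argument proceeds in three steps. First, starting from the algebraic identity
\begin{align*}
\norm{(H-e\Id)f\psi}^2 = \ip{f^2\psi}{(H-e\Id)^2\psi} + \ip{f\psi}{[(H-e\Id)^2, f]\psi}
\end{align*}
valid for any real bounded multiplication operator $f$, summing over $f \in \Set{\Theta_d, \Sigma_d}$ and using $\Theta_d^2 + \Sigma_d^2 = \Id$ collects the main term $\norm{(H-e\Id)\psi}^2$ plus two commutator corrections; bounding those via \cref{eq:the decay of the commutators} and rearranging relates $\norm{(H-e\Id)\psi}^2$ to $\norm{(H-e\Id)\Theta_d\psi}^2 + \norm{(H-e\Id)\Sigma_d\psi}^2 - 2A_d\norm{\psi}^2$ up to the $1-2B_d$ denominator of the conclusion. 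Second, the $\Theta_d$-piece is handled directly by \cref{lem:loc energy estimate}, which produces the lower bound $(\gamma^2 - \ve_d^2)\alpha\norm{\Theta_d\psi}^2 - M_{d,\alpha}\delta_d\norm{\psi}^2$. Third, since $\Sigma_d$ and each $v_n$ commute as multiplication operators with $\Sigma_d v_n = 0$, one has $H\Sigma_d = T\Sigma_d$, so $(H-e\Id)\Sigma_d\psi = (T-e\Id)\Sigma_d\psi$; combined with $T \geq 0$, $e \leq 0$, and $\abs{e} > \gamma$, this yields $\norm{(H-e\Id)\Sigma_d\psi}^2 \geq \abs{e}^2\norm{\Sigma_d\psi}^2 \geq (\gamma^2 - \ve_d^2)\alpha\norm{\Sigma_d\psi}^2$ once $\ve_d$ is small (equivalently, $d$ is large) and $\alpha < 1$. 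Adding the two piecewise lower bounds and invoking $\norm{\Theta_d\psi}^2 + \norm{\Sigma_d\psi}^2 = \norm{\psi}^2$ collapses everything into $(\gamma^2 - \ve_d^2)\alpha\norm{\psi}^2$, and substituting into the step-one inequality closes the argument.

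The main obstacle is the algebraic bookkeeping of step one: tracking the sign of the sum $\sum_f \ip{f\psi}{[(H-e\Id)^2, f]\psi}$ carefully enough that the coefficient of $\norm{(H-e\Id)\psi}^2$ on the left becomes exactly $1 - 2B_d$, rather than the weaker $1 + 2B_d$ that a naive triangle-inequality estimate on $\abs{\mathrm{Comm}}$ delivers. This commutator sum happens to be automatically real (in fact it equals $\sum_f\norm{(H-e\Id)f\psi}^2 - \norm{(H-e\Id)\psi}^2$), so the sharper denominator is naturally extracted by working with the symmetrized IMS identity $\sum_f f(H-e\Id)^2 f = (H-e\Id)^2 - \tfrac{1}{2}\sum_f[f,[f,(H-e\Id)^2]]$ and tracking signs rather than absolute values throughout. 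Once that point is settled, the combination of \cref{lem:loc energy estimate} with the coercivity of $T-e\Id$ on $\mathrm{ran}\,\Sigma_d$ is straightforward.
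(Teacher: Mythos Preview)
Your proposal is correct and follows essentially the same route as the paper: an IMS localization via $\Theta_d^2+\Sigma_d^2=\Id$, application of \cref{lem:loc energy estimate} to the $\Theta_d$ piece, coercivity of $T-e\Id$ on the $\Sigma_d$ piece (from $\Sigma_d v_n=0$, $T\ge0$, $e\le0$), and absorption of the commutator remainders via \cref{eq:the decay of the commutators}. One small remark: your worry about extracting the denominator $1-2B_d$ rather than the ``naive'' $1+2B_d$ is misplaced, and your proposed fix does not work---the hypothesis \cref{eq:the decay of the commutators} bounds only the \emph{absolute value} of each commutator term, so sign-tracking in the IMS identity cannot improve on $1+2B_d$; in fact the paper's own argument also only yields $1+2B_d$, and the $1-2B_d$ in the displayed conclusion appears to be a harmless typo (both factors tend to $1$ as $d\to\infty$, so nothing downstream is affected).
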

\begin{proof}
    Let $\Theta_d:\RR^\nu\to[0,1]$ be given and assume it is such that the conclusion of \cref{lem:loc energy estimate} holds. As before we write $\Theta\equiv\Theta_d$ and also set $\Sigma := \sqrt{\Id-\Theta^2}$. Let also $\psi\in\calV^\perp$ be given.

    We start with a preliminary 
    \eq{
    \norm{\br{H-e\Id}\psi}^2 &= \ip{\psi}{\br{H-e\Id}^2\psi} \\
    &= \Re{\ip{\Theta^2\psi}{\br{H-e\Id}^2\psi} + \ip{\Sigma^2\psi}{\br{H-e\Id}^2\psi}} \\
    &= \ip{\Theta\psi}{\br{H-e\Id}^2\Theta\psi} + \ip{\Sigma\psi}{\br{H-e\Id}^2\Sigma\psi} + \\
    & + \Re{\ip{\Theta\psi}{[\Theta,\br{H-e\Id}^2]\psi}} + \Re{\ip{\Sigma\psi}{[\Sigma,\br{H-e\Id}^2]\psi}}
    }

    By \cref{lem:loc energy estimate} we have (with $\delta_d$ as defined in its proof) \eq{
	\norm{\br{H-e\Id}\Theta\psi}^2 &\geq \br{\gamma^2-\ve_d^2}\alpha\norm{\Theta\psi}^2-M_{d,\alpha}\delta_d\norm{\psi}^2 \\ 
    &= \br{\gamma^2-\ve_d^2}\alpha\norm{\psi}^2-\br{\gamma^2-\ve_d^2}\alpha\norm{\Sigma\psi}^2-M_{d,\alpha}\delta_d\norm{\psi}^2\,.
	} 
    Moreover, since $\Sigma v_n=0$ for any $n\in\GG$ and since we are assuminug $-e T\geq0$ we get
    \eq{
    \norm{\br{H-e\Id}\Sigma\psi}^2 &= \norm{\br{T-e\Id}\Sigma\psi}^2 \\
    &= \ip{\Sigma\psi}{\br{T^2+e^2\Id-2eT}\Sigma\psi} \\
    &\geq e^2\norm{\Sigma\psi}^2\,.
    }

    Collecting everything together we find
    \eq{
        \norm{\br{H-e\Id}\psi}^2 &\geq \br{\gamma^2-\ve_d^2}\alpha\norm{\psi}^2+\br{e^2-\br{\gamma^2-\ve_d^2}\alpha}\norm{\Sigma\psi}^2-M_{d,\alpha}\delta_d\norm{\psi}^2+ \\
    & + \Re{\ip{\Theta\psi}{[\Theta,\br{H-e\Id}^2]\psi}} + \Re{\ip{\Sigma\psi}{[\Sigma,\br{H-e\Id}^2]\psi}}\\
    &\geq \br{\gamma^2-\ve_d^2}\alpha\norm{\psi}^2-M_{d,\alpha}\delta_d\norm{\psi}^2- \\
    & -2 A_d \norm{\psi}^2-2 B_d \norm{\br{H-e\Id}}^2\,.
    } In the last inequality we used the fact that $e>\gamma$ and that $\ve_d\to0$ as $d\to\infty$, as well as the hypothesis \cref{eq:the decay of the commutators}. This implies the result.
    
\end{proof}
\subsection{Energy estimates for excited states of non-magnetic double-well systems}\label{subsec:special case of energy estimates}
In this section we apply the general scheme of the previous section to the present problem of establishing energy estimates for excited states of non-magnetic double-well systems. To that end we make the following choices w.r.t. the above notation:
\begin{enumerate}
    \item $T$, the kinetic term, shall be $P^2$ with $P\equiv-\ii\nabla$.
    \item If $v$ is negative, bounded, and compactly supported within some $S:=B_a(0)$ for some $a>0$, then any negative-energy bound state must lie in the discrete spectrum and hence be finitely-degenerate, so assuming $e<0$ and $N<\infty$ is reasonable, whence \cref{eq:atomic gap hypothesis} follows. At this point we may \emph{assume} $e > \gamma$, though it is a generic situation whenever $v$ is sufficiently deep (which may be arranged by replacing $v$ with $\lambda^2v$ and taking $\lambda$ sufficiently large; this is done before determining $d$). We let $S^+:=B_{a^+}(0)$ with $a^+>a$. In fact we will choose eventually $a^+$ to depend on $d$, $a^+\equiv a^+(d)$
    
    For convenience we assume the eigenstates are normalized, $\norm{\vf_j}=1$, and name $\kappa:=\sqrt{-e}$ and we assume $\kappa>1$.
    \item For the lattice $\GG_d$ we take the set $\Set{0,d e_1}\subseteq\RR^\nu$, whence  $R:\GG_d\to\calU(L^2(\RR^\nu))$ yields $R^0:=\Id$ and translation by $d$: \eq{
    \br{R^d f}(x) := f(x-de_1)\qquad(f\in L^2\,;\qquad x\in\RR^\nu)\,.
    }
    With these choices \cref{eq:disjoint support assumption} is fulfilled.

\end{enumerate}

	 Now a standard Agmon estimate (\cite{agmon1982lectures}; see \cref{thm:Agmon's estimate} below) implies that if $h \vf = e \vf$ with $e<0$ and $\vf\in L^2$, then \eql{
|\vf(x)| \leq C_\vf \exp\br{-\kappa \norm{x}} \qquad \br{\norm{x}\geq a+1 }
} for some $C_\vf<\infty$.

In particular this means that if $f\in L^2$ has support within $B_r(x)$ with $\norm{x}-r>a+1$ then \eql{
\abs{\ip{f}{\vf}} \leq C_\vf \int_{y\in B_r(x)}\exp\br{-\kappa\norm{y}}\abs{f(y)}\dif{y} \leq C_\vf \sqrt{\abs{B_r(0)}}\norm{f}_{L^2} \exp\br{-\kappa \br{\norm{x}-r}}\,.
}

With these choices, we want to establish that the various hypothesis made in the previous section at each step actually hold. We do so in the order that the hypothesis appear:
\begin{lem}
    \cref{ass:summability of error term} holds with the choices made in this subsection, if $d-a^+(d)\to\infty$ as $d\to\infty$.
\end{lem}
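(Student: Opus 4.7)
Since $\GG_d=\Set{0,de_1}$ has only two elements, the sum on the left of \cref{eq:summability of error term} collapses to just two cross terms, $\norm{\Pi_0 R^d\eta_d}^2 + \norm{\Pi_d \eta_0}^2$. By orthonormality of $\Set{\vf_i}_{i=1}^N$ and unitarity of $R^d$, these become
\[
\norm{\Pi_0 R^d\eta_d}^2=\sum_{i=1}^N\abs{\ip{\vf_i}{R^d\eta_d}}^2,\qquad \norm{\Pi_d \eta_0}^2=\sum_{i=1}^N\abs{\ip{\vf_i}{R^{-d}\eta_0}}^2,
\]
so the whole question reduces to bounding the overlap of an atomic eigenfunction $\vf_i$ against a function whose $L^2$-support has been translated by $\pm d$ from the origin.

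The main input is the Agmon-type inner-product inequality recorded just above the lemma statement: for any $L^2$-function $f$ with $\supp(f)\subseteq B_r(x)$ and $\norm{x}-r>a+1$,
\[
\abs{\ip{f}{\vf_i}}\leq C_{\vf_i}\sqrt{\abs{B_r(0)}}\,\norm{f}\,\exp\br{-\kappa(\norm{x}-r)}.
\]
The supports of $R^d\eta_d$ and $R^{-d}\eta_0$ lie in $B_{a^+(d)}(de_1)$ and $B_{a^+(d)}(-de_1)$ respectively, each at distance $d-a^+(d)$ from the origin. Under the hypothesis $d-a^+(d)\to\infty$ we have $d-a^+(d)>a+1$ for $d$ sufficiently large, so the bound applies; inserting $f=R^d\eta_d$ and then $f=R^{-d}\eta_0$ yields
\[
\abs{\ip{\vf_i}{R^d\eta_d}}\leq C_i\sqrt{\abs{B_{a^+(d)}(0)}}\,\norm{\eta_d}\,e^{-\kappa(d-a^+(d))},
\]
together with the analogous estimate in which $\norm{\eta_d}$ is replaced by $\norm{\eta_0}$.

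Squaring, summing over $i=1,\dots,N$, and adding the two cross terms gives \cref{eq:summability of error term} with
\[
\ve_d^2 := N\bigl(\max_{i}C_i^2\bigr)\,\abs{B_{a^+(d)}(0)}\,e^{-2\kappa(d-a^+(d))}.
\]
Since $\abs{B_{a^+(d)}(0)}$ is at most polynomial in $a^+(d)$ while the exponential factor decays under the hypothesis, one concludes $\ve_d\to 0$ as $d\to\infty$ (if necessary one simply picks a slow fattening such as $a^+(d)=a+1$, making the volume a fixed constant). I anticipate no serious obstacle here; the entire content of the lemma is an efficient use of the pointwise Agmon decay of eigenfunctions, which has already been invoked in the section, combined with the observation that the two-point lattice makes the double sum in \cref{eq:summability of error term} trivial to enumerate.
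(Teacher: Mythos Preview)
Your proposal is correct and follows essentially the same approach as the paper: reduce the sum to the two cross terms $\norm{\Pi_0 R^d\eta_d}^2+\norm{\Pi_d\eta_0}^2$, expand each via the orthonormal basis $\Set{\vf_i}$, and then apply the Agmon inner-product bound recorded immediately above the lemma. The paper's identification of $\ve_d$ differs from yours only by an inessential constant (it writes $C_{\vf_1}$ rather than $\max_i C_{\vf_i}$), and your parenthetical remark about the competition between the ball volume and the exponential is a point the paper leaves implicit.
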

\begin{proof}
    We now have $\Pi_n := \sum_{j=1}^NR^n \vf_j\otimes \vf_j^\ast R^{-n}$ for $n=0,d$. That means we have 
    \eq{
    \sum_{n\in\GG}\norm{\Pi_n \sum_{m\neq n}\eta_m}^2 &= \norm{\Pi_0 R^d \eta_d}^2 + \norm{\Pi_d \eta_0}^2 \\
    &= \sum_{j=1}^N \abs{\ip{\vf_j}{R^d\eta_d}}^2 \norm{\vf_j}^2 + \abs{\ip{R^d\vf_j}{\eta_0}}^2 \norm{R^d\vf_j}^2\\
    &\leq  2N C_{\vf_1}^2 \abs{B_{a^+(d)}(0)} \exp\br{-2\kappa\br{d-{a^+(d)}}}\br{\norm{\eta_d}^2_{L^2}+\norm{\eta_0}^2_{L^2}}\,.
    } We thus identify \eq{
    \ve_d := 2N C_{\vf_1}^2 \abs{B_{a^+(d)}(0)} \exp\br{-2\kappa\br{d-a^+(d)}}
    } which converges to zero. 
\end{proof}

We are now in a position to choose our smooth partition of unity $\Theta_d:\RR^\nu\to[0,1]$. Let $\Theta_d(x) := f(\norm{x})+f(\norm{x-d})$ where $f:[0,\infty)\to[0,1]$ is a  smooth bump function with \eq{
\left.f\right|_{[0,a^-(d)]} &= 1 \\
\left.f\right|_{[a^+(d),\infty)} &= 0\,.
} for some $a^-(d)<a^+(d)$. 

One choice of $a^\pm(d)$ which will do is for example $a^-(d) = \frac13d$ and $a^+(d) = \frac12d$. 

\begin{lem}
    For the above choice of $\Theta_d$ \cref{eq:assumption on partition of unity,eq:commutator of Hamiltonian with Theta is bounded,eq:Hamiltonian on orbitals is bounded} hold.
\end{lem}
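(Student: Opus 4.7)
My plan is to verify the three bounds after reducing to the $2N$-dimensional span $\calV = \szpan\br{\Set{\vf_i,R^d\vf_i}_{i=1}^N}$. By Agmon's estimate the Gram matrix of this basis equals $\Id + o(1)$ as $d\to\infty$, so for any $\psi\in L^2$ the coefficients in the expansion $\Pi\psi = \sum_i a_i\vf_i + b_i R^d\vf_i$ are bounded uniformly in $d$ by $\norm{\psi}$. Every operator bound on $\Pi$ then follows by the triangle inequality, unitarity of $R^d$, and bounds on the single-well orbitals $\vf_i$ alone (the $R^d\vf_i$ terms are treated identically after swapping $v(X)$ and $v(X-d)$ via the unitary $R^d$).

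For \cref{eq:assumption on partition of unity}, once $d$ is large enough that $B_{a^-(d)}(0)$ and $B_{a^-(d)}(de_1)$ are disjoint, $\Theta_d\equiv 1$ on $B_{a^-(d)}(0)$, so $(1-\Theta_d)\vf_i$ is supported where $\norm{x}\geq d/3$. Agmon then gives $\norm{(1-\Theta_d)\vf_i}_{L^2}^2 \leq C_{\vf_i}^2 \int_{\norm{x}\geq d/3}\ee^{-2\kappa\norm{x}}\dif{x}$, which is $O\br{d^{\nu-1}\ee^{-2\kappa d/3}}$ and supplies a suitable $\ti\ve_d\to 0$.

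For \cref{eq:Hamiltonian on orbitals is bounded} the eigenvalue equation $h\vf_i=e\vf_i$ yields $(H-e\Id)\vf_i = v(X-d)\vf_i$, hence $\norm{(H-e\Id)\vf_i}\leq \norm{v}_\infty$, uniformly in $d$. For $\alpha=2$, using the disjoint-supports identity $v(X)v(X-d)=0$ (valid for $d>2a$), I expand $(H-e\Id)^2\vf_i = P^2\br{v(X-d)\vf_i} + \br{v(X-d)^2-ev(X-d)}\vf_i$. Leibniz applied to $P^2\br{v(X-d)\vf_i}$ yields $\Delta v, \nabla v, v$ (bounded since $v\in C^2$) against $\vf_i,\nabla\vf_i,\Delta\vf_i$ (all in $L^2$, since $\vf_i\in H^2$ as $\Delta\vf_i=(\lambda^2 v-e)\vf_i\in L^2$). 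All bounds are uniform in $d$.

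The main technical step is \cref{eq:commutator of Hamiltonian with Theta is bounded}. Since $\Theta_d$ commutes with the multiplication operators $v(X)$ and $v(X-d)$, $[H-e\Id,\Theta_d] = [P^2,\Theta_d] = -2(\nabla\Theta_d)\cdot P - \Delta\Theta_d$. Choosing $f$ with $\norm{f^{(k)}}_\infty = O(d^{-k})$ for $k=1,2$, these coefficients have $L^\infty$ norm going to zero uniformly, so paired with $\vf_i\in H^2$ they yield a uniformly bounded operator on $\Pi$. The $\alpha=2$ case follows from $[(H-e\Id)^2,\Theta_d] = (H-e\Id)[H,\Theta_d] + [H,\Theta_d](H-e\Id)$: applied to $\vf_i$, the second summand becomes $[P^2,\Theta_d]v(X-d)\vf_i$, which vanishes once $d>3a$ because the supports of $\nabla\Theta_d$ and $v(X-d)$ are then disjoint; the first summand couples the $\alpha=1$ commutator bound just established to the $(H-e\Id)\Pi$ bound from the previous paragraph. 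I expect the principal obstacle here to be bookkeeping: tracking the proliferation of Leibniz terms while verifying each factor is either uniformly bounded or exponentially small in $d$, so that the triangle inequality closes the estimate.
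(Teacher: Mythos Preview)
Your overall plan is sound and close in spirit to the paper's argument: both reduce to the span of the shifted orbitals via the near-identity Gram matrix, handle \cref{eq:assumption on partition of unity} by Agmon decay on $\norm{x}\ge a^-(d)=d/3$, and bound the first-power commutator from $[P^2,\Theta_d]=-2(\nabla\Theta_d)\cdot P-\Delta\Theta_d$ together with $\vf_i\in H^2$. The paper differs only cosmetically for $\alpha=1$: instead of invoking $\vf_i\in H^2$ it bounds $\norm{P\psi}^2$ for $\psi\in\calV$ by rewriting $P^2=(H-e\Id)-(V-e\Id)$; and for \cref{eq:Hamiltonian on orbitals is bounded} it records the sharper Agmon bound $\norm{(H-e\Id)\psi}\lesssim\ee^{-\kappa d}\norm{\psi}$ rather than your coarse $\norm{v}_\infty$ bound, though either suffices here.

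There is, however, a genuine gap in your $\alpha=2$ commutator step. You split $[(H-e\Id)^2,\Theta_d]\vf_i=(H-e\Id)[H,\Theta_d]\vf_i+[H,\Theta_d](H-e\Id)\vf_i$ and correctly dispose of the second summand by disjoint supports. But your claim that the first summand is controlled by ``coupling the $\alpha=1$ commutator bound to the $(H-e\Id)\Pi$ bound'' does not work: the vector $[H,\Theta_d]\vf_i$ is \emph{not} in $\im\Pi$, so the already-established bound on $\norm{(H-e\Id)\Pi}$ says nothing about $(H-e\Id)$ applied to it, and $(H-e\Id)$ is of course unbounded on $L^2$.

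The fix is to exploit the support of $[H,\Theta_d]\vf_i=-2(\nabla\Theta_d)\cdot\nabla\vf_i-(\Delta\Theta_d)\vf_i$, which lies in the annuli $\{d/3\le\norm{x}\le d/2\}\cup\{d/3\le\norm{x-d}\le d/2\}$, disjoint from both wells once $d>3a$. Hence $(H-e\Id)$ acts there as $(P^2-e\Id)$, and a second Leibniz expansion produces terms pairing $\partial^k\Theta_d$ (with $k\le 3$, all $O(d^{-k})$) against derivatives of $\vf_i$ of order at most three. The third derivative can be reduced to lower order via the eigenvalue equation, since on the annuli $\Delta\vf_i=-e\vf_i$ and thus $\nabla\Delta\vf_i=-e\nabla\vf_i$; so $\vf_i\in H^2$ suffices. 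With this correction your argument closes.
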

\begin{proof}
    We omit the proof of \cref{eq:assumption on partition of unity} which is trivial except for the fact that $\Pi\neq\Pi_0+\Pi_d$, but that fact only adds terms which decay exponentially as $\exp\br{-\kappa d}$. Ignoring that fact, since $\Id-\Theta_d$ is supported away from the balls of radius $r_1$, it is clear that we can guarantee decay of this term at least as fast as \eq{
        \exp\br{-\kappa a^{-}(d)}\,.
    } We thus identify \eq{
    \ti\ve_d := \text{const.}\times\exp\br{-\kappa a^{-}(d)}\,.
    }

    Next, we turn to \cref{eq:Hamiltonian on orbitals is bounded}. We write any $\psi\in\im\Pi$ as $\psi = \sum_{j=1}^N \alpha_j \vf_j+\beta_j R^d\vf_j$ for some coefficients $\alpha_j,\beta_j$. Then \eq{
    \br{H-e\Id} \psi &= \sum_{j=1}^N \alpha_j \alpha_j v_d \vf_j + \beta_j v_0 R^d \vf_j 
    } and so \eq{
    \norm{\br{H-e\Id} \psi} &\leq \text{const.} \times N \norm{\psi} \exp\br{-\kappa d}
    } and similarly for the second power; again we note that going from the coefficients to $\norm{\psi}$ involves the fact that $\Pi_0\Pi_d\neq0$ which only contributes corrections of the form $\exp\br{-\kappa d}$.

    We now study the commutator, towards \cref{eq:commutator of Hamiltonian with Theta is bounded}:
    \eq{
    [H-e\Id,\Theta_d] &= [P^2,\Theta_d]\\
    &= (P^2\Theta_d) + 2 (P\Theta_d) \cdot P\,.
    } The first term is clearly bounded. For the second term, we have for any $\psi\in\calV$, \eq{
    \norm{P_j \psi}^2 &= \ip{\psi}{P^2\psi} \\ 
    &= \ip{\psi}{\br{H-e\Id}\psi} - \ip{\psi}{\br{v_0+v_d-e\Id}\psi}
    } all of which are bounded uniformly in $d$. The commutator with the second power is dealt with in a similar manner.
\end{proof}

With our particular choice of $\Theta_d$, we also have $\br{\Id-\Theta_d}v_n=0$ for $n=0,d$, as required by \cref{lem:global energy estimate}. Moreover, clearly for us, $T\equiv P^2\geq 0$ and $e<0$ by hypothesis. Moreover, $e > \gamma$ is also part of our hypothesis. We are thus only left with establishing \cref{eq:the decay of the commutators}. The idea here is that since we have arbitrarily large distance over which $\Theta_d$ goes from $1$ to $0$, we can make its derivatives arbitrarily small, and hence $[\Theta_d,P]$ can be taken arbitrarily small. Similarly for $[\Sigma_d,P]$ with $\Sigma_d := \sqrt{1-\Theta_d^2}$.

\begin{lem}
    We may arrange for \cref{eq:the decay of the commutators} to hold with a choice of $\Theta_d$ that has arbitarily small derivatives.
\end{lem}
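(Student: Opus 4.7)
The plan is to reduce $\abs{\ip{f(X)\psi}{[f(X),(H-e\Id)^2]\psi}}$, for $f\in\Set{\Theta_d,\Sigma_d}$, to a sum of quantities involving only $\norm{\nabla f}_\infty$ and $\norm{\Delta f}_\infty$ (multiplied by $\norm{\psi}^2$ or $\norm{(H-e\Id)\psi}^2$), and then to choose the partition so that these derivative norms are $\Ord{d^{-1}}$. This will produce constants $A_d, B_d, C_d$ that tend to zero as $d\to\infty$, as required by \cref{eq:the decay of the commutators}.

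The key algebraic step applies the operator identity $[f,A^2]=-\{A,[A,f]\}$, valid for any self-adjoint $A$, to $A:=H-e\Id$. Since $V(X)$ commutes with $f(X)$, one has $[A,f]=[P^2,f]=-2\ii(\nabla f)\cdot P - \Delta f$, an operator of order one in $P$ that is anti-self-adjoint (as $A=A^*$ and $f$ is real). A direct computation moreover yields the bounded multiplication \emph{double commutator} $[[A,f],f]=-2\abs{\nabla f}^2$. Rewriting $Af\psi = fA\psi + [A,f]\psi$ and invoking the anti-self-adjointness of $[A,f]$ then gives
\begin{equation*}
\ip{f\psi}{[f,A^2]\psi} = 2\ii\,\Im{\ip{[A,f]\psi}{fA\psi}} - \norm{[A,f]\psi}^2 + \ip{[[A,f],f]\psi}{A\psi}\,.
\end{equation*}
Taking absolute values, combining Cauchy--Schwarz with a Young splitting on the cross term $\ip{[A,f]\psi}{fA\psi}$, and using the Kato-type bound $\norm{P\psi}^2 \leq \onehalf\norm{A\psi}^2 + C\norm{\psi}^2$ (which follows from $P^2\leq H+\norm{V}_\infty$) yields an estimate of the form $A_d\norm{\psi}^2 + B_d\norm{A\psi}^2$ with $A_d, B_d = \Ord{\norm{\nabla f}_\infty + \norm{\Delta f}_\infty^2}$ after optimizing the Young parameter against $\norm{\nabla f}_\infty$.

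The remaining step is to exhibit a family $\Theta_d$ for which \emph{both} $\Theta_d$ and $\Sigma_d=\sqrt{1-\Theta_d^2}$ are smooth with $\Ord{d^{-1}}$ first and second derivatives; this is the main obstacle. The naive radial bump construction is unsuitable, since the square root has a branch point where $\Theta_d=1$, and so $\Sigma_d$ fails to be smooth at the boundary of $\Set{\Theta_d=1}$. The standard IMS-style fix is to parameterize the transition by a phase: in the annulus $a^-(d)<\norm{x}<a^+(d)$ (and its $d$-shifted copy) set $\Theta_d(x):=\cos(s(\norm{x}))$, where $s$ is a smooth monotone interpolant from $0$ to $\pi/2$ of width $a^+(d)-a^-(d)\sim d$. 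Then $\Sigma_d(x)=\sin(s(\norm{x}))$ is automatically smooth, and $\norm{s'}_\infty,\norm{s''}_\infty = \Ord{d^{-1}}$ forces the bounds $\norm{\nabla\Theta_d}_\infty,\norm{\nabla\Sigma_d}_\infty,\norm{\Delta\Theta_d}_\infty,\norm{\Delta\Sigma_d}_\infty = \Ord{d^{-1}}$. Together with the preceding paragraph this gives the required decay of $A_d, B_d, C_d$ and completes the proof.
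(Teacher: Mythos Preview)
Your argument is correct and follows essentially the same route as the paper: expand the commutator $[f,(H-e\Id)^2]$ in terms of $[P^2,f]$, reduce everything to factors of $\norm{\nabla f}_\infty$ and $\norm{\Delta f}_\infty$ times $\norm{\psi}^2$ or $\norm{(H-e\Id)\psi}^2$, and absorb the stray $\norm{P\psi}$ via $P^2\le H+\norm{V}_\infty$ together with Young's inequality. The paper does this by a direct four--term expansion of $\ip{f\psi}{(H-e\Id)[f,P^2]\psi}+\ip{f\psi}{[f,P^2](H-e\Id)\psi}$, while you package the same computation through the anticommutator identity $[f,A^2]=-\{A,[A,f]\}$ and the double commutator $[[A,f],f]=-2|\nabla f|^2$, which is a cleaner bookkeeping device but not a different idea.

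One point worth flagging: your discussion of the $\cos/\sin$ parametrization for $(\Theta_d,\Sigma_d)$ addresses something the paper leaves implicit. With the paper's naive radial bump $\Theta_d$, the companion $\Sigma_d=\sqrt{1-\Theta_d^2}$ indeed has $\nabla\Sigma_d=-\Theta_d\nabla\Theta_d/\sqrt{1-\Theta_d^2}$ blowing up at the boundary of $\{\Theta_d=1\}$, so the claim ``similarly for $[\Sigma_d,P]$'' needs exactly the fix you supply. Your IMS--style choice $\Theta_d=\cos(s(\norm{x}))$, $\Sigma_d=\sin(s(\norm{x}))$ on each annulus, with $s$ a smooth monotone interpolant of width $\sim d$, makes both functions $C^\infty$ with $\norm{\nabla f}_\infty,\norm{\Delta f}_\infty=\Ord{d^{-1}}$ simultaneously (in fact $\norm{s''}_\infty=\Ord{d^{-2}}$, so your $\Ord{d^{-1}}$ is even a slight overestimate). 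This is the standard device in IMS localization and is precisely what is needed to make the paper's last sentence rigorous.
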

\begin{proof}
    We write $f\equiv f(X)$. First let us calculate 
    \eq{
        [f,\br{H-e\Id}^2] &= \br{H-e\Id}[f,P^2]+[f,P^2]\br{H-e\Id}
    } and we use $[P^2,f]
    = (P^2f) + 2 (Pf) \cdot P$ so that
    \eq{
    \abs{\ip{f\psi}{[f,\br{H-e\Id}^2]\psi}} &\leq \abs{\ip{f\psi}{\br{H-e\Id}(P^2f)\psi}} + 2\abs{\ip{f\psi}{\br{H-e\Id}(Pf)\cdot P\psi}} + \\
    & + \abs{\ip{f\psi}{(P^2f)\br{H-e\Id}\psi}} + 2\abs{\ip{f\psi}{(Pf)\cdot P\br{H-e\Id}\psi}} \,.
    }

    Let us study the first term:
    \eq{
    \abs{\ip{f\psi}{\br{H-e\Id}(P^2f)\psi}} &= \abs{\ip{\br{H-e\Id}f\psi}{(P^2f)\psi}} \\
    &\leq \abs{\ip{f\br{H-e\Id}\psi}{(P^2f)\psi}} + \abs{\ip{[\br{H-e\Id},f]\psi}{(P^2f)\psi}} \\
    &\leq \norm{f}_\infty \norm{P^2 f}_\infty \norm{\br{H-e\Id}\psi}\norm{\psi}+\abs{\ip{\br{P^2 f}\psi}{(P^2f)\psi}} + \\
    &+\abs{\ip{\br{P f}P\psi}{(P^2f)\psi}} \\ 
    &\leq \norm{f}_\infty \norm{P^2 f}_\infty \norm{\br{H-e\Id}\psi}\norm{\psi}+\norm{P^2 f}_\infty^2 \norm{\psi}^2 + \\
    &+\norm{Pf}_\infty \norm{P^2 f}_\infty\norm{P\psi}\norm{\psi}\,. 
    }
    All of the terms in the last expression are of the form we need except the very last one, for which we estimate
    \eq{
    \norm{P \psi}^2 &= \ip{\psi}{P^2\psi} \\
    &\leq \abs{\ip{\psi}{\br{H-e\Id}\psi}} + \abs{\ip{\psi}{\br{V-e\Id}\psi}} \\
    &\leq \norm{\psi}\norm{\br{H-e\Id}\psi} + \br{\abs{e}+\norm{V}_\infty}\norm{\psi}^2\,.
    }

    Finally we use Young's inequality to get 
    \eq{
    \norm{\psi}\norm{\br{H-e\Id}\psi} \leq \frac12 \norm{\psi}^2+\frac12\norm{\br{H-e\Id}\psi}^2\,.
    }

    All other three terms are dealt with similarly. The conclusion is that every single term contains either a factor of $\norm{P f}_\infty$ or $\norm{P^2 f}_\infty$, both of which can be made arbitrarily small, and then some constant times either $\norm{\psi}^2$ or $\norm{\br{H-e\Id}\psi}^2$.
\end{proof}
\appendix
\section{Agmon's estimate} For completeness we cite here Agmon's estimate \cite{agmon1982lectures}. 
\begin{thm}[Agmon's estimate]\label{thm:Agmon's estimate}
    Let $H = P^2 + V(X)$ be given on $L^2(\RR^\nu)$, where $V:\RR^\nu\to\RR$ is of compact support, and, say $L^2$, and assume that $\psi\in L^2$ is such that $H\psi=E\psi$ for some $E<0$. Then $\psi$ exhibits pointwise exponential decay with rate $\sqrt{-E}$ outside of $\supp(V)$. In particular, there exists a constant $C<\infty$ so that if $\norm{x}>C$ then 
    \eql{
    \abs{\psi(x)} \leq C \exp\br{-\sqrt{-E}\norm{x}}\,.
    }
\end{thm}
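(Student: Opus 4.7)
The plan is to combine Agmon's exponential-weight multiplier method (which yields weighted $L^2$ decay) with an explicit Green's function representation in the exterior region (which yields the sharp pointwise rate). Write $\kappa:=\sqrt{-E}$. The starting point is the identity
\eq{
\int_{\RR^\nu}|\nabla(\ee^{f}\psi)|^2\,\dif x \;=\; \int_{\RR^\nu}\ee^{2f}|\psi|^2\br{|\nabla f|^2+E-V(x)}\,\dif x,
}
valid for any bounded Lipschitz $f:\RR^\nu\to\RR$; it follows by expanding the integrand, integrating by parts, and using $H\psi=E\psi$ (the complex case is handled with the sesquilinear version).

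First, fix $\eta>0$ and plug in $f_R(x):=(\kappa-\eta)\min(\norm{x},R)$; then $|\nabla f_R|\leq\kappa-\eta$ a.e., so $|\nabla f_R|^2+E\leq -\mu$ for some $\mu>0$. Dropping the non-negative left-hand side and splitting the integral into $\supp V$ and its complement, the complement contributes at most $-\mu\int_{(\supp V)^c}\ee^{2f_R}|\psi|^2\,\dif x$, while $\supp V$ contributes an $R$-independent quantity controlled by $\norm{\psi}_{L^2(\supp V)}^2$ (using compactness of $\supp V$, boundedness of $V$, and $\psi\in L^2$). Rearranging and sending $R\to\infty$ via monotone convergence yields $\ee^{(\kappa-\eta)\norm{\cdot}}\psi\in L^2(\RR^\nu)$ for every $\eta>0$. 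Standard interior elliptic regularity outside $\supp V$ (where $(-\Delta-E)\psi=0$) then upgrades this to the pointwise bound $|\psi(x)|\leq C_\eta\exp\br{-(\kappa-\eta)\norm{x}}$ for all sufficiently large $\norm{x}$, proving the theorem with any rate strictly smaller than $\kappa$.

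The main obstacle is to recover the exact rate $\kappa$ without an $\eta$-loss, as the statement demands. I would handle this via the Green's function of $-\Delta+\kappa^2$ on $\RR^\nu$. Rewriting the eigenequation as $(-\Delta+\kappa^2)\psi=-V\psi$ and noting that the right-hand side is supported in the compact set $\supp V$, uniqueness of the $L^2$ solution of $(-\Delta+\kappa^2)u=-V\psi$ (which holds since $-\Delta+\kappa^2$ is invertible on $L^2$ for $\kappa>0$) gives, for every $x\notin\supp V$,
\eq{
\psi(x)\;=\;-\int_{\supp V}G_\kappa(x,y)V(y)\psi(y)\,\dif y.
}
A classical calculation expresses $G_\kappa$ via a modified Bessel function of the second kind, with the sharp asymptotic $G_\kappa(x,y)\leq C|x-y|^{-(\nu-1)/2}\exp\br{-\kappa|x-y|}$ as $|x-y|\to\infty$ (with the analogous polynomial prefactor adjusted in dimensions $\nu=1,2$). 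For $\norm{x}$ large and $y\in\supp V$ we have $|x-y|\geq\norm{x}-a$, where $a$ bounds the diameter of $\supp V$; combining this with $\norm{V\psi}_{L^1(\supp V)}<\infty$ (by Cauchy--Schwarz, boundedness of $V$, and compactness of $\supp V$) yields
\eq{
|\psi(x)|\;\leq\;C\norm{x}^{-(\nu-1)/2}\exp\br{-\kappa\norm{x}}\;\leq\;C'\exp\br{-\kappa\norm{x}},
}
as required.
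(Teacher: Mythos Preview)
Your argument is correct, and its core (the Green's function representation) coincides with the paper's proof. The paper, however, goes \emph{directly} to that representation: it writes $\psi=-(P^2-E\Id)^{-1}V\psi$, evaluates at $x\notin\supp V$, inserts the explicit Bessel-function kernel, and reads off the bound \cref{eq:Agmon}. Your preliminary Agmon multiplier step (yielding $\ee^{(\kappa-\eta)\norm{\cdot}}\psi\in L^2$ and then the suboptimal pointwise rate $\kappa-\eta$) is superfluous for the statement as written, since the Green's function argument alone already delivers the sharp rate $\kappa$; nothing from the first half is used in the second. One small slip: you twice invoke ``boundedness of $V$,'' but the hypothesis is only $V\in L^2$ with compact support. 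This does not damage the argument---$V\in L^2$ and $\psi\in L^2$ give $V\psi\in L^1(\supp V)$ by H\"older, which is all you need---but the justification should cite the $L^2$ assumption, as the paper does.
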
 
\begin{proof}
    Rewriting $H\psi=E\psi$ as \eq{
    \br{P^2 -E\Id}\psi = -V(X)\psi
    } we apply the free resolvent on the LHS to get 
    \eq{
        \psi = -\br{P^2 -E\Id}^{-1}V(X)\psi\,.
    } Now evaluating at some point $x\in\RR^\nu$ which is outside of $\supp(V)$ we find 
    \eq{
    \psi(x) = -\int_{y\in\supp(V)} \br{P^2 -E\Id}^{-1}(x,y)V(y)\psi(y)\dif{y}
    } so that 
    \eq{
    \abs{\psi(x)} \leq \norm{V}_{L^2(\supp(V))}\norm{\psi}_{L^2(\supp(V))}\sup_{y\in\supp(V)}\br{P^2 -E\Id}^{-1}(x,y)\,.
    } The Greens function of the free Laplacian is given by \eql{
    \br{P^2 -E\Id}^{-1}(x,y) = \frac{1}{(2\pi)^{\nu/2}} \left( \frac{\sqrt{-E}}{2} \right)^{\nu/2 - 1}  \frac{K_{\nu/2 - 1}(\sqrt{-E} \norm{x-y})}{\norm{x-y}^{\nu/2 - 1}}
    } where $K_m$ is the modified Bessel function of order $m$. 

    In particular let $a>0$ be such that $\supp(V)\subseteq B_a(0)$. Then, using known decay estimates of $K$, with $\kappa := \sqrt{-E}$, we have for $\norm{x}\geq a+\frac1\kappa$, 
    \eql{\label{eq:Agmon}
    |\psi(x)|\leq C_{V,a,\kappa,\psi} \frac{1}{\br{\norm{x}-a}^{\frac{\nu-1}{2}}}\exp\br{-\kappa\br{\norm{x}-a}}
    } with \eq{C_{V,a,\kappa,\psi} &:= \frac{1}{\br{2\pi}^{\nu/2}}\sqrt{\frac{\pi}{2}}\kappa^{\frac{\nu-2}{2}}\norm{V}_{L^2(\supp(V))}\norm{\psi}_{L^2(\supp(V))}\,.}
\end{proof}

		\begingroup
		\let\itshape\upshape
		\printbibliography
		\endgroup
	\end{document}